\newif\ifniceformat
\newif\iflcsscolor
\newcommand{\ucmathlist}{%
    \def\alpha{\mathrm{A}}%
    \def\beta{\mathrm{B}}%
    \let\gamma=\Gamma
    \let\delta=\Delta
    \def\epsilon{\mathrm{E}}%
    \def\varepsilon{\mathrm{E}}%
    \def\zeta{\mathrm{Z}}%
    \def\eta{\mathrm{H}}%
    \let\theta=\Theta
    \let\vartheta=\Theta
    \def\iota{\mathrm{I}}%
    \def\kappa{\mathrm{K}}%
    \let\lambda=\Lambda
    \def\mu{\mathrm{M}}%
    \def\nu{\mathrm{N}}%
    \let\xi=\Xi
    \let\pi=\Pi
    \let\varpi=\Pi
    \def\rho{\mathrm{P}}%
    \def\varrho{\mathrm{P}}%
    \let\sigma=\Sigma
    \def\tau{\mathrm{T}}%
    \let\upsilon=\Upsilon
    \let\phi=\Phi
    \let\varphi=\Phi
    \def\chi{\mathrm{X}}%
    \let\psi=\Psi
    \let\omega=\Omega
}
\theoremstyle{plain}
    \newtheorem{theorem}{Theorem}
    \newtheorem{corollary}[theorem]{Corollary}
\theoremstyle{definition}
    \newtheorem{definition}{Definition}
    \newtheorem{remark}{Remark}
    \newtheorem{assumption}{Assumption}
\def\renewtheorem#1{%
    \expandafter\let\csname#1\endcsname\relax
    \expandafter\let\csname c@#1\endcsname\relax
    \gdef\renewtheorem@envname{#1}
    \renewtheorem@secpar
}
\def\renewtheorem@secpar{\@ifnextchar[{\renewtheorem@numberedlike}{\renewtheorem@nonumberedlike}}
\def\renewtheorem@numberedlike[#1]#2{\newtheorem{\renewtheorem@envname}[#1]{#2}}
\def\renewtheorem@nonumberedlike#1{  
    \def\renewtheorem@caption{#1}
    \edef\renewtheorem@nowithin{\noexpand\newtheorem{\renewtheorem@envname}{\renewtheorem@caption}}
    \renewtheorem@thirdpar
}
\def\renewtheorem@thirdpar{\@ifnextchar[{\renewtheorem@within}{\renewtheorem@nowithin}}
\def\renewtheorem@within[#1]{\renewtheorem@nowithin[#1]}
\newcommand{\citep}[1]{\cite{#1}}
\newcommand{\citet}[1]{\cite{#1}}
\let\NAT@parse\undefined
    \title{Scalable Distributed Nonlinear Control \\ Under Flatness-Preserving Coupling}
        \title{Scalable Distributed Nonlinear Control \\ Under Flatness-Preserving Coupling}
        \title{\LARGE \bf Scalable Distributed Nonlinear Control \\ Under Flatness-Preserving Coupling}
    \author{Fengjun Yang, \IEEEmembership{Student Member, IEEE}, Jake Welde, and Nikolai Matni, \IEEEmembership{Member, IEEE}
    \thanks{Manuscript submitted for review September 12, 2025. This work was supported in part by NSF Awards SLES-2331880, ECCS-2045834, and ECCS-2231349 and AFOSR Award FA9550-24-1-0102.}
    \thanks{F. Yang and N. Matni are with the GRASP Lab at the University of Pennsylvania, Philadelphia, PA 19104. (e-mails: \{fengjun, nmatni\}@seas.upenn.edu).}
    \thanks{J. Welde is with the Sibley School of Mechanical and Aerospace Engineering, Cornell University, Ithaca, NY 14850.
    (e-mail: jakewelde@cornell.edu).}}
    \author{Fengjun Yang, Jake Welde, and Nikolai Matni
    \thanks{F. Yang and N. Matni are with the GRASP Laboratory, University of Pennsylvania, PA, USA. 
    J. Welde is with the Sibley School of Mechanical and Aerospace Engineering, Cornell University, Ithaca, NY, USA.
    This work was supported in part by NSF Awards SLES-2331880, ECCS-2045834, ECCS-2231349, AFOSR Award FA9550-24-1-0102, and DARPA TRS program under contract HR00112590145.}%
    }
\begin{document}
\maketitle

\begin{abstract}%
We study distributed control for a network of nonlinear, differentially flat subsystems subject to dynamic coupling. Although differential flatness simplifies planning and control for isolated subsystems, the presence of coupling can destroy this property for the overall joint system. Focusing on subsystems in pure-feedback form, we identify a class of compatible lower-triangular dynamic couplings that preserve flatness and guarantee that the flat outputs of the subsystems remain the flat outputs of the coupled system. Further, we show that the joint flatness diffeomorphism can be constructed from those of the individual subsystems and, crucially, its sparsity structure reflects that of the coupling. Exploiting this structure, we synthesize a distributed tracking controller that computes control actions from local information only, thereby ensuring scalability. We validate our proposed framework on a simulated example of planar quadrotors dynamically coupled via aerodynamic downwash, and show that the distributed controller achieves accurate trajectory tracking.

\end{abstract}

\section{Introduction}
The distributed control of multi-agent systems---a central problem in the deployment of large teams of robots---is challenging due to dynamic coupling between subsystems and limited communication between controllers. These challenges are further exacerbated when the system dynamics are nonlinear. To deal with this challenge, existing approaches often require restrictive structural assumptions or suffer from high computation costs \citep{hill1980dissipative, scattolini2009architectures}. For many nonlinear systems, differential flatness offers a powerful framework for planning and control \citep{fliess1995flatness} by establishing \textit{flat outputs} and a \textit{flatness diffeomorphism} that maps between the system and an equivalent linear one, thereby simplifying planning and control. However, the application of flatness-based approaches to the control of dynamically coupled multi-agent systems remains limited, as the presence of such coupling between subsystems may destroy flatness, even when the subsystems (in isolation) are themselves differentially flat.

To overcome such limitations, we build upon our prior work \citep{yang2025learning} to show that for flat subsystems in pure feedback form, a class of lower-triangular couplings between subsystems preserves differential flatness of the joint system. Moreover, the joint flat output is simply the concatenation of those of the individual subsystems, giving rise to a flat space where the subsystems evolve under decoupled, linear dynamics. We further identify mild assumptions under which the sparsity structure of the joint flatness diffeomorphism echoes that of the dynamic coupling. Such sparsity enables the design of a distributed flatness-based tracking controller for the joint system, since only local information is needed to evaluate each subsystem's component of the joint flatness diffeomorphism. We validate our approach in a simulation, where a team of planar quadrotors operates in proximity (and thus subject to downwash coupling). Our results show that, despite the coupled dynamics, the proposed controller achieves accurate tracking under limited communication.

\subsection{Related Work}

\paragraph{Nonlinear Distributed Control}
Existing literature on nonlinear distributed control spans several paradigms. Distributed nonlinear model predictive control (DNMPC) \citep{scattolini2009architectures, muller2017economic, van2017distributed} solves constrained optimal control problems via distributed optimization and can accommodate arbitrary subsystem dynamics. However, such methods suffer from high computation costs, limiting their application in real-time to systems with fast dynamics. Small-gain methods \citep{jiang1994small,dashkovskiy2010small} provide robust stability guarantees by requiring the subsystems to be input-to-state stable with respect to their interconnections, which can lead to conservatism in practice. Alternatively, passivity-based methods \citep{hill1980dissipative} also guarantee stability but are only applicable to a restrictive class of energy-dissipative couplings. In contrast, our method leverages differential flatness to reduce a nonlinear distributed control problem to a linear one. For systems conforming to our assumptions, we avoid the computational burden of DNMPC and the conservatism of small-gain methods, while tolerating couplings that might not be strictly dissipative.

\paragraph{Distributed Control of Flat Systems}
Differential flatness has long been exploited in the control of single-agent systems, as well as in consensus and formation control for multi-agent systems whose subsystems are completely decoupled \citep{mondal2022consensus, mondal2025distributed, van2017distributed}. However, for dynamically coupled flat subsystems, flatness of the joint system is not guaranteed, nor is it clear how to construct the joint diffeomorphism in general \citep{nicolau2022flatness}.
\ifniceformat
\citet{menara2020conditions}
\else
Menara et al. \citep{menara2020conditions}
\fi
provide a sufficient condition for the flatness of complex networks, but require solving a partial differential equation for the flat outputs.
\ifniceformat
\citet{bidram2013distributed, bidram2014synchronization}
\else
Bidram et al. \citep{bidram2013distributed, bidram2014synchronization}
\fi
apply the closely-related notion of feedback linearization to a network of dynamically coupled power generators, but assume that the coupling is quasi-static and thus can be approximated as a constant in the diffeomorphism. In contrast, focusing on pure-feedback systems, we identify a class of dynamic couplings where the joint system is provably flat and explicitly construct its flat outputs and diffeomorphism, without assuming quasi-static coupling.

\subsection{Contributions}
Our contributions are two-fold.
\begin{itemize}
    \item \textbf{Flatness-Preserving Coupling}: We identify a class of lower-triangular dynamic coupling that preserves flatness for subsystems in pure-feedback form. Further, we show that the flatness diffeomorphism of the joint system can be constructed explicitly, with a sparsity structure echoing that of the dynamic coupling.
    \item \textbf{Distributed Flatness-based Controller}: Exploiting the strong structural properties of the flatness diffeomorphism, we design a distributed tracking controller, wherein each subsystem only requires local information to compute its control input. We validate the proposed framework in simulation with planar quadrotors coupled via aerodynamic downwash, demonstrating accurate tracking under limited communication.
\end{itemize}

\subsection{Notation}
For a signal $\vcx(t)$, we denote the $r$-th order time derivative of $\vcx$ as $\vcx^{(r)}$. For the first and second-order time derivatives of $\vcx$, we also use the shorthand $\dot\vcx$ and $\ddot\vcx$, respectively. For a positive integer $N$, we define $[N]:=\{1, \ldots, N\}$ to be the set of positive integers smaller than or equal to $N$. We use $D_\vcz h(\cdot)$ to denote the partial Jacobian matrix of a function $h$ with respect to $\vcz$ and $Dh(\cdot)$ to denote the full Jacobian of $h$ with respect to all of its arguments. For derivatives of higher orders, we use $D_\vcz^k h(\cdot)$ to denote the $k$-th order Fréchet derivative of $h$ with respect to $\vcz$.
\section{Problem Formulation}
Consider a network of $N$ nonlinear systems, where each subsystem $i$ has state $\vcx^i \in \R^{n_i}$ and control input $\vcu^i \in \R^{m_i}$. Denote the joint state and control input as
$$\vcx := [\vcx^1, \ldots, \vcx^N], \quad \vcu := [\vcu^1, \ldots, \vcu^N].$$
The joint system evolves under the smooth dynamics
\begin{equation}\label{eq: joint-dynamics}
    \dot\vcx = f(\vcx,\vcu).
\end{equation}
We assume that the joint dynamics can be decomposed as
\begin{equation}\label{eq: coupled-joint-dynamics}
    f(\vcx, \vcu) = \bar{f}(\vcx, \vcu) + \Delta(\vcx),
\end{equation}
where $\bar{f}(\vcx, \vcu)$ captures the uncoupled subsystem dynamics,
\begin{equation}\label{eq: uncoupled-dynamics}
    \bar{f}(\vcx, \vcu) = \begin{bmatrix}
        \bar{f}^1(\vcx^1, \vcu^1) \\
        \vdots \\
        \bar{f}^N(\vcx^N, \vcu^N)
    \end{bmatrix},
\end{equation}
and $\Delta(\vcx)$ represents dynamic coupling between the systems (e.g., aerodynamic interference or physical interconnections).

We consider uncoupled subsystem dynamics $\bar{f}^i$ that are \textit{differentially flat}\footnote{For a detailed treatment of differential flatness, see e.g., \citet{fliess1995flatness}.}, that is, there exists a \textit{flat output}
\begin{equation*}
    \vcy^i = \Lambda^i(\vcx^i, \vcu^i, \dot{\vcu^i}, \ddot{\vcu^i}, \cdots, \vcu^{i^{(\alpha)}})
\end{equation*}
such that a finite number of its time derivatives can be used to reconstruct the state $\vcx^i$ and control input $\vcu^i$:
\begin{align*}
    \vcx^i &= \Phi^i(\vcy^i, \dot{\vcy^i}, \ddot{\vcy^i}, \dots, \vcy^{i^{(\beta)}}),\\
    \vcu^i &= \Psi^i(\vcy^i, \dot{\vcy^i}, \ddot{\vcy^i}, \dots, \vcy^{i^{(\beta)}}).
\end{align*}
Here, $\Lambda^i$, $\Phi^i$, and $\Psi^i$ are smooth functions, with $(\Phi^i,\Psi^i)$ referred to as the \textit{flatness diffeomorphism}. The quantities $\alpha$ and $\beta$ are finite positive integers.

We seek to establish and leverage the differential flatness of the joint system to simplify controller design. However, even if the uncoupled subsystems are differentially flat, it is unclear what structural assumptions need to be imposed on $\Delta$ to render the joint system flat. Moreover, flatness alone does not provide a straightforward path towards a control strategy amenable to distributed implementation, since the maps to and from the flat space may in general depend on arbitrary subsystem states. We therefore seek to understand when the joint flatness diffeomorphism exhibits beneficial sparsity.

To address these challenges, we focus on subsystems with uncoupled dynamics that can be placed in pure-feedback form. Such systems are differentially flat under mild regularity conditions\footnote{We defer the details on the regularity condition to Assumption~\ref{assm: uncoupled-dynamics-regular}.}, and the dynamics of many mechanical systems can be expressed in this form ({e.g.}, unicycles, planar quadrotors, and fully-actuated manipulators) \citep{rathinam1998configuration}. Further, their rich structures allow us to characterize a class of compatible dynamic couplings with desired properties. Formally, we assume that the subsystem dynamics are pure-feedback with relative degree $r$, i.e., that there exists a decomposition $\vcx^i = [\vcx^i_1, \ldots, \vcx^i_r]$ such that the (decoupled) dynamics of subsystem $i$ take the form\footnote{For simplicity, we assume that each subsystem has relative degree $r$ and input dimension $m_i$, so that $\vcx^i \in \R^{rm_i}$. The analysis can be extended to systems with heterogeneous relative degrees via dynamic extension \citep{yang2025learning}.}
\begin{equation}\label{eq: flat-PFF}
    \dot\vcx^i = \bar f^i(\vcx^i, \vcu^i)
    =
    \begin{bmatrix}
        \dot \vcx^i_1 \\
        \dot \vcx^i_2 \\
        \vdots \\
        \dot \vcx^i_r
    \end{bmatrix}
    =
    \begin{bmatrix}
        \bar{f}^i_1(\vcx^i_1, \vcx^i_2) \\
        \bar{f}^i_2(\vcx^i_1, \vcx^i_2, \vcx^i_3) \\
        \vdots \\
        \bar{f}^i_r(\vcx^i_1, \ldots, \vcx^i_r, \vcu^i)
    \end{bmatrix}.
\end{equation}

\begin{remark}
    We use superscripts to index subsystems and subscripts to index the pure-feedback order. For instance, $\vcx^i_j$ denotes the $j^\textrm{th}$-order substate of the $i^\textrm{th}$ subsystem. When convenient, we also write $\vcx_j = [\vcx^1_j, \ldots, \vcx^N_j]$ to denote the collection of $j^\textrm{th}$-order substates across all subsystems.
\end{remark}

Our goal is to identify a constraint on the dynamic coupling term $\Delta$ under which the joint system \eqref{eq: joint-dynamics} is differentially flat, with flat outputs and a flatness diffeomorphism that are structurally compatible with distributed controller implementation. To this end, we first identify a class of \textit{lower-triangular} couplings that ensures the joint dynamics are differentially flat (Theorem~\ref{thm: flatness-of-canonical-coupling}), with flat outputs identical to those of the uncoupled subsystems and therefore locally computable. We then show that the joint diffeomorphism $(\Phi, \Psi)$ can be constructed from those of the subsystems (Theorem~\ref{thm: form-of-pert-flat-map}). Finally, we analyze the structural properties of $(\Phi, \Psi)$ (Corollary~\ref{cor: coupling-locality}) and show that they can be exploited to design distributed tracking controllers in Section~\ref{sec: tracking-control}.



\section{Flatness-Preserving Coupling}\label{sec: fpp}
We consider a class of \textit{lower-triangular} couplings that preserve the pure-feedback structure of the uncoupled dynamics. We show that under suitable regularity conditions, the joint system is differentially flat under this class of dynamic coupling, and the joint flat output is the collection of those of the uncoupled systems. Further, the flatness diffeomorphism of the joint system under this class of couplings can be constructed from those of the uncoupled systems. We first recall a standard regularity assumption on the uncoupled dynamics \eqref{eq: uncoupled-dynamics}.
\begin{assumption}[Regularity of Uncoupled Dynamics]
\label{assm: uncoupled-dynamics-regular}
    For each subsystem $i=1,\dots,N$, there exists a neighborhood of the operating point $(\vcx^*, \vcu^*)$ on which the uncoupled dynamics \eqref{eq: flat-PFF} satisfy
    \begin{itemize}
        \item the $\bar{f}^i_j$ are smooth, and
        \item the partial Jacobians 
            \begin{align*}
            \left\lvert D_{\vcx_{j+1}}\bar{f}^i_j(\vcx^*_1, \ldots,\vcx^*_{j+1}) \right\rvert &\neq 0, \quad j=1, \ldots, r-1,\\ \left\lvert D_{\vcu}\bar{f}^i_r(\vcx^*_1, \ldots,\vcx^*_{r}, \vcu^*) \right\rvert &\neq 0,
            \end{align*}
            are non-singular.
        \end{itemize}
\end{assumption}
It is well-known \citep{murray1995differential} that for systems satisfying Assumption~\ref{assm: uncoupled-dynamics-regular}, there exists an open neighborhood $\stX \ni (\vcx^*,\vcu^*)$ on which all subsystems are differentially flat under the flat outputs $\vcy^i=\vcx^i_1$, respectively.\footnote{See our prior work \citet{yang2025learning} for a proof of this fact in this notation.}

We denote the coupling term affecting subsystem $i$ as $\Delta^i$, and further, its effect on the $j^\textrm{th}$ substate as $\Delta^i_j$, such that
\begin{equation}
    \dot{\mathbf{x}}^i_j = \bar{f}^i_j(\mathbf{x}^i_1, \mathbf{x}^i_2, \ldots, \mathbf{x}^i_{i+1}) + \Delta^i_j(\mathbf{x}).
\end{equation}
Building on our prior work \citet{yang2025learning}, we consider a class of \textit{lower-triangular} couplings that preserves the pure-feedback structure of the joint dynamics and allows for the flatness diffeomorphism of the joint system to be found via a forward-substitution-like recursion.

\begin{definition}[Lower-Triangular Coupling]\label{defn: lower-triangular-coupling}
    A coupling term ${\Delta}(\vcx)$ is \textit{lower-triangular} on $\stX$ if its effect on the $j^\textrm{th}$ substate of the subsystem $i$, $\Delta^i_j$, is smooth and can be expressed as
    \begin{equation}\label{eq: defn-canonical-lt-coupling}
        \Delta^i_j(\vcx) = \bar{\Delta}^i_j(
        \underbrace{\vcx_1, \ldots, \vcx_{j}}_{=:\vcx_{\leq j}},
        \underbrace{\vcx^1_{j+1}, \ldots, \vcx^{i-1}_{j+1}}_{=:\vcx^{<i}_{j+1}}),
    \end{equation}
    for all $(\vcx, \vcu) \in \stX$. In other words, $\Delta^i_j$ depends only on terms of a lower feedback order $\vcx_{\leq j}$ and $(j+1)^\textrm{th}$ order terms from subsystems with lower indices $\vcx^{<i}_{j+1}$.
\end{definition}

\begin{figure}
    \centering
    \includegraphics[width=0.7\linewidth]{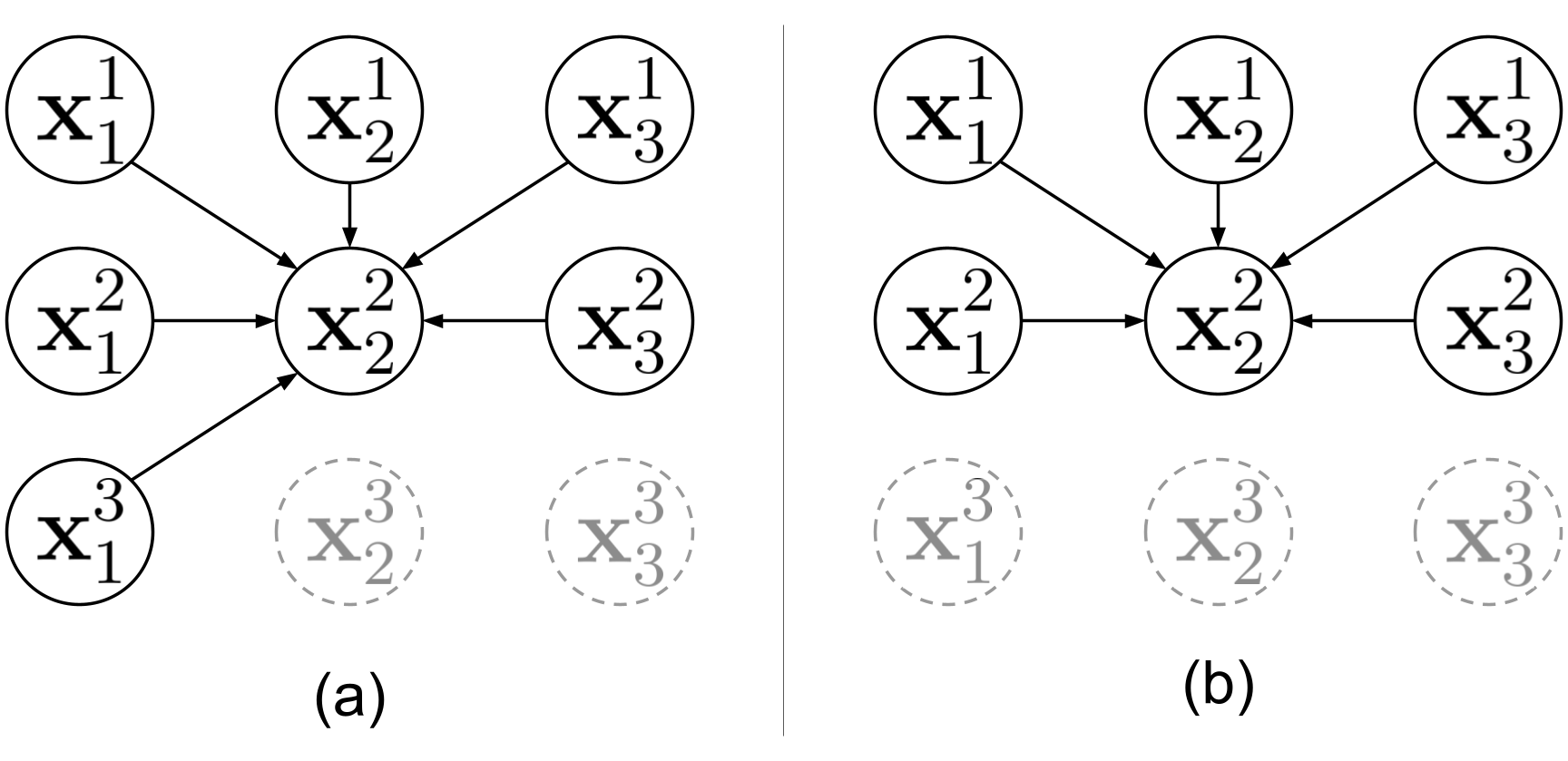}
    \caption{Illustration of the allowed dependencies for $\Delta^2_2$ on $N=3$ subsystems with relative degree $r=3$. (a) Lower-triangular coupling (Def.~\ref{defn: lower-triangular-coupling}) (b) Strongly lower-triangular coupling (Def.~\ref{defn: block-lower-triangular})}
    \label{fig: lt-illustration}
\end{figure}

While lower-order states from any subsystem may affect the $i$-th subsystem's states of the same or higher order, the dependence on $\vcx^{<i}_{j+1}$ in \eqref{eq: defn-canonical-lt-coupling} implicitly imposes an acyclic structure: higher-indexed subsystems may be affected by $(j+1)^\textrm{th}$ order states from lower-indexed ones, but not vice versa. \rewrite{}{The lower-triangular structure allows for a cascade decomposition of the joint system, which is closely related to the existence of invariant foliations on the state manifold \citep{fliess1985decompositions}}. For simplicity, we present Definition~\ref{defn: lower-triangular-coupling} and the subsequent results for the canonical case where this acyclic ordering is fixed across $\stX$ and coincides with the subsystem indices. \rewrite{}{We leave extending our results to a case where the ordering is state-varying to future work.}

We emphasize that the acyclicity restriction only applies to $(j+1)^\textrm{th}$ order dependencies. In Section~\ref{sec: locality-of-diffeo}, we analyze a special case\rewrite{}{, named \textit{strongly lower-triangular couplings} (Def.~\ref{defn: block-lower-triangular}),} where such dependencies are absent \rewrite{and thus no acyclicity is required}{. In this case, no acyclic ordering is required, allowing the framework to accommodate bidirectional and cyclic couplings}. \rewrite{}{While the lower-triangular assumption does not cover all arbitrary coupling structures, it captures highly relevant physical interactions, e.g., networks of quadrotors dynamically coupled via downwash, where empirically, the dominant aerodynamic effects depend only on the relative position of the vehicles \citep{shi2021neural}.} In Figure~\ref{fig: lt-illustration}, we show an illustration of these structural assumptions for a specific choice of $i$ and $j$.

\begin{theorem}\label{thm: flatness-of-canonical-coupling}
Let Assumption~\ref{assm: uncoupled-dynamics-regular} hold for the uncoupled dynamics and suppose that the coupling term $\Delta(\vcx)$ is lower-triangular. Then, the joint dynamics \eqref{eq: coupled-joint-dynamics} are differentially flat on $\stX$ with the flat output given by $\vcy = [\vcx^1_1\quad \vcx^2_1\quad\ldots\quad\vcx^N_1]$.
\end{theorem}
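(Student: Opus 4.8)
The plan is to verify flatness directly from the definition by exhibiting the reconstruction maps $(\Phi,\Psi)$, since the forward direction is trivial: the candidate flat output $\vcy=\vcx_1$ is literally a sub-block of the state, so $\vcy=\Lambda(\vcx)$ holds with $\Lambda$ a projection carrying no dependence on $\vcu$ or its derivatives. The entire content of the theorem is thus to show that, on a neighborhood of $(\vcx^*,\vcu^*)$ inside $\stX$, the full state $\vcx$ and input $\vcu$ can be written as smooth functions of $\vcy$ and finitely many of its time derivatives. I would construct these maps by a forward-substitution recursion that climbs the pure-feedback ladder one feedback order at a time, and within each order sweeps the subsystems in increasing index order.

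Concretely, I would prove by induction on $j$ the claim that $\vcx_j$ is a smooth function of $(\vcy,\dot\vcy,\dots,\vcy^{(j-1)})$ on a suitable neighborhood. The base case $j=1$ is immediate since $\vcx_1=\vcy$. For the inductive step, assume $\vcx_1,\dots,\vcx_j$ have been expressed in terms of $\vcy$ and its derivatives. I would then read off $\vcx_{j+1}$ from the $j$-th order dynamics $\dot\vcx^i_j = \bar{f}^i_j(\vcx^i_1,\dots,\vcx^i_{j+1}) + \bar{\Delta}^i_j(\vcx_{\leq j},\vcx^{<i}_{j+1})$. Here the left-hand side $\dot\vcx^i_j$ is obtained by differentiating the already-known expression for $\vcx^i_j$, hence is a function of $(\vcy,\dots,\vcy^{(j)})$; on the right-hand side, $\vcx_{\leq j}$ is known by the inductive hypothesis, and the only unknown order-$(j+1)$ quantities are $\vcx^i_{j+1}$ and $\vcx^{<i}_{j+1}$. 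Sweeping $i=1,\dots,N$, the term $\vcx^{<i}_{j+1}$ has already been resolved when we reach subsystem $i$ (for $i=1$ it is empty), so at each step the sole unknown is $\vcx^i_{j+1}$.

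The key step is the local inversion, and it is where the lower-triangular structure of Definition~\ref{defn: lower-triangular-coupling} does all the work: $\bar{\Delta}^i_j$ depends on $\vcx_{\leq j}$ and $\vcx^{<i}_{j+1}$ but \emph{not} on $\vcx^i_{j+1}$. Consequently the Jacobian of the right-hand side with respect to the unknown $\vcx^i_{j+1}$ equals $D_{\vcx^i_{j+1}}\bar{f}^i_j$, which is nonsingular at $(\vcx^*,\vcu^*)$ by Assumption~\ref{assm: uncoupled-dynamics-regular}. The implicit function theorem then yields a unique smooth local solution $\vcx^i_{j+1}$ as a function of $(\vcy,\dots,\vcy^{(j)})$ and the previously computed $\vcx^{<i}_{j+1}$; composing these finitely many local solutions gives $\vcx_{j+1}$ as a smooth function of $(\vcy,\dots,\vcy^{(j)})$, closing the induction. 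Running this recursion for $j=1,\dots,r-1$ reconstructs all of $\vcx$; a final application at order $r$, where the coupling $\bar{\Delta}^i_r$ is state-only and hence already known, solves $\dot\vcx^i_r=\bar{f}^i_r(\vcx^i_1,\dots,\vcx^i_r,\vcu^i)+\bar{\Delta}^i_r(\vcx)$ for $\vcu^i$ using nonsingularity of $D_\vcu\bar{f}^i_r$, and this step decouples across $i$ precisely because the coupling carries no input dependence.

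The main obstacle is less any single computation than the careful bookkeeping that makes the recursion close: verifying that the coupling's admissible dependencies are \emph{exactly} those that leave the forward-substitution structure intact (acyclicity in $i$ to sequence the order-$(j+1)$ unknowns, and absence of $\vcx^i_{j+1}$ and higher orders to keep the inversion Jacobian equal to the uncoupled one), and checking that no step raises the required derivative order of $\vcy$ beyond $j$ when solving for $\vcx_{j+1}$. I would also intersect the finitely many neighborhoods produced by the successive implicit-function-theorem applications to obtain a single neighborhood, contained in $\stX$, on which $(\Phi,\Psi)$ are simultaneously well defined and smooth.
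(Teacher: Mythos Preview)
Your proposal is correct. The paper's proof is more compact: rather than unrolling the forward-substitution recursion explicitly, it observes that the joint dynamics, under the decomposition $\vcx=[\vcx_1,\dots,\vcx_r]$, are themselves in pure-feedback form, and then verifies the joint regularity condition in one stroke by noting that $D_{\vcx_{j+1}}f_j$ is block lower-triangular with diagonal blocks $D_{\vcx^i_{j+1}}\bar f^i_j$ (nonsingular by Assumption~\ref{assm: uncoupled-dynamics-regular}), hence itself nonsingular; it then invokes the standard result \citep{murray1995differential} that regular pure-feedback systems are flat with output $\vcx_1$. Your argument is essentially the constructive proof of that cited result, specialized to this joint system and carried out subsystem-by-subsystem within each feedback order. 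This buys you an explicit construction of the reconstruction maps (which the paper defers to the separate Theorem~\ref{thm: form-of-pert-flat-map}) at the cost of more bookkeeping. Both routes rest on the same structural observation: the lower-triangular coupling leaves the inversion Jacobian with respect to $\vcx^i_{j+1}$ equal to the uncoupled $D_{\vcx^i_{j+1}}\bar f^i_j$, so regularity is inherited from the subsystems.
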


\begin{proof}
    Note that the joint dynamics \eqref{eq: joint-dynamics} is pure-feedback under the state decomposition ${\vcx = [\vcx_1\;\vcx_2\ldots\;\vcx_r]}$, as
    \ifniceformat
    \begin{multline}\label{eq: joint-dynamics-pure-feedback}
        \dot\vcx_k = f_k(\vcx_1, \ldots, \vcx_k, \vcx_{k+1}) :=
        \begin{bmatrix}
            \bar{f}^1_k(\vcx^1_1, \ldots, \vcx^1_k, \vcx^1_{k+1}) + 
                \bar{\Delta}^1_k(\vcx_1, \ldots, \vcx_k)\\
            \bar{f}^2_k(\vcx^2_1, \ldots, \vcx^2_k, \vcx^2_{k+1}) +
                \bar{\Delta}^2_k(\vcx_1, \ldots, \vcx_k, \vcx^1_{k+1}) \\
            \vdots \\
            \bar{f}^N_k(\vcx^N_1, \ldots, \vcx^N_k, \vcx^N_{k+1}) +
            \bar{\Delta}^N_k(\vcx_1, \ldots, \vcx_k, \vcx^{<N}_{k+1})
        \end{bmatrix}.
    \end{multline}
    \else
    \begin{multline}\label{eq: joint-dynamics-pure-feedback}
        \dot\vcx_k = f_k(\vcx_1, \ldots, \vcx_k, \vcx_{k+1}) := \\
        \begin{bmatrix}
            \bar{f}^1_k(\vcx^1_1, \ldots, \vcx^1_k, \vcx^1_{k+1}) + 
                \bar{\Delta}^1_k(\vcx_1, \ldots, \vcx_k)\\
            \bar{f}^2_k(\vcx^2_1, \ldots, \vcx^2_k, \vcx^2_{k+1}) +
                \bar{\Delta}^2_k(\vcx_1, \ldots, \vcx_k, \vcx^1_{k+1}) \\
            \vdots \\
            \bar{f}^N_k(\vcx^N_1, \ldots, \vcx^N_k, \vcx^N_{k+1}) +
            \bar{\Delta}^N_k(\vcx_1, \ldots, \vcx_k, \vcx^{<N}_{k+1})
        \end{bmatrix}.
    \end{multline}
    \fi
    Moreover, the joint pure-feedback system is regular, as
    \begin{equation}\label{eq: joint-dynamics-regular}
        \begin{aligned}
            &\left\lvert D_{\vcx_{i+1}}f_i(\vcx^{*}_1, \ldots, \vcx^{*}_i, \vcx^*_{i+1})\right\rvert\\
            =&\det
            \begin{bmatrix}
                D_{\vcx^1_{i+1}}f^1_i & 0 & \cdots & 0\\
                D_{\vcx^1_{i+1}}\Delta^2_i & D_{\vcx^2_{i+1}}f^2_i & \cdots & 0\\
                \vdots & \vdots & \ddots & \vdots\\
                D_{\vcx^1_{i+1}}\Delta^N_i & D_{\vcx^2_{i+1}}\Delta^N_i & \cdots & D_{\vcx^N_{i+1}}f^N_i
            \end{bmatrix}\\
            =& \prod_{j=1}^{N}\det\left(D_{\vcx^j_{i+1}}f^j_i\right) \neq 0,
        \end{aligned}
    \end{equation}
    where the second equality is a result of the lower-triangular structure of the matrix, and the last equality uses the assumption that the uncoupled individual systems are regular. The flatness of \eqref{eq: joint-dynamics} and its flat outputs then follows directly from the well-known result that regular pure-feedback systems are differentially flat \citep{murray1995differential}.
\end{proof}
Crucially, Theorem~\ref{thm: flatness-of-canonical-coupling} shows that the joint flat output $\vcy$ is simply the concatenation of the individual subsystem flat outputs $\vcy^i$. This will allow us to explicitly construct the flatness diffeomorphism of the joint system from those of the uncoupled systems.

\begin{assumption}\label{assm: nominal-flat-map}
    For all $i=1, ..., N$, a set of smooth functions ${h^i_k: \R^{(k+1)m}\to\R^m}$ that satisfy
    \begin{equation}\label{eq: defn-of-g}
        \begin{aligned}
            &h^{i}_k\left(\vcx^{i}_1, \ldots, \vcx^{i}_{k}, \bar{f}^{i}_k(\vcx^{i}_1, \ldots, \vcx^{i}_{k+1})\right) = \vcx^{i}_{k+1}, \\
            &\hspace{47mm} k = 1, \ldots, r-1,\\
            &h^{i}_{r}\left(\vcx^{i}_1, \ldots, \vcx^{i}_{k}, \bar{f}^{i}_r(\vcx^{i}_1, \ldots, \vcx^{i}_{r}, \vcu)\right) = \vcu
        \end{aligned}
    \end{equation}
    for all ${(\vcx, \vcu) \in \mathcal{X}}$ is known\footnote{The existence of such maps are guaranteed by Assumption~\ref{assm: uncoupled-dynamics-regular} and the implicit function theorem. (See \citet{yang2025learning} for a proof.) \rewrite{}{For many well-known flat systems, they are known and can be expressed in closed-form.}}.
\end{assumption}
%
%
\begin{theorem}\label{thm: form-of-pert-flat-map}
    Let the assumptions of Theorem~\ref{thm: flatness-of-canonical-coupling} and Assumption~\ref{assm: nominal-flat-map} hold. Define the recursively constructed maps $\Phi^i_k$ in terms of the functions $h^i_k$ given in equation~\eqref{eq: defn-of-g} as follows:
    \ifniceformat
    \begin{equation}\label{eq: pert-flat-map-construction}
    \begin{aligned}
        &\Phi^i_1(\vcy) = \vcy^i, \\
        &\Phi^i_k(\vcy,\ldots,\vcy^{(k-1)}) = h^i_{k-1}\Bigg(\Phi^i_1, \ldots, \Phi^i_{k-1},        D\Phi_{k-1}^i
        \begin{bmatrix}
            \dot\vcy \\ \vdots \\ \vcy^{(k-1)}.
        \end{bmatrix}
        - \Delta^i_{k-1}(\Phi_1, \ldots, \Phi_{k-1}, \Phi^1_k, \ldots, \Phi^{i-1}_k)
        \Bigg),
    \end{aligned}
    \end{equation}
    \else
        \begin{equation}\label{eq: pert-flat-map-construction}
        \begin{aligned}
            &\Phi^i_1(\vcy) = \vcy^i, \\
            &\Phi^i_k(\vcy,\ldots,\vcy^{(k-1)}) = h^i_{k-1}\Bigg(\Phi^i_1, \ldots, \Phi^i_{k-1}, \\ &\;
            D\Phi_{k-1}^i
            \begin{bmatrix}
                \dot\vcy \\ \vdots \\ \vcy^{(k-1)}.
            \end{bmatrix}
            - \Delta^i_{k-1}(\Phi_1, \ldots, \Phi_{k-1}, \Phi^1_k, \ldots, \Phi^{i-1}_k)
            \Bigg),
        \end{aligned}
        \end{equation}
    \fi
    for $k=2, \ldots, r+1$, wherein we omit the arguments of the $\Phi^j_k$ on the right-hand side to reduce notational clutter. Then, under the joint dynamics $\bar f + \Delta$ and for all $i=1, \ldots, N$ and $k=1,\ldots,r$, we have
    \begin{equation}\label{eq: pert-map-result}
        \begin{aligned}
            &\vcx^i_k = \Phi^i_k(\vcy, \ldots, \vcy^{(k-1)}), \quad
            &\vcu^i = \Phi^i_{r+1}(\vcy, \ldots, \vcy^{(r)})
        \end{aligned}
    \end{equation}
    for $(\vcx, \vcu) \in \mathcal{X}$.
\end{theorem}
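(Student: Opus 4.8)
The plan is to establish the representation \eqref{eq: pert-map-result} by a \emph{nested induction}: an outer induction on the feedback order $k$ (running $k=1,\ldots,r$, and finally $k=r+1$ for the input), and, at each level, an inner induction on the subsystem index $i=1,\ldots,N$. The lower-triangular structure of $\Delta$ from Definition~\ref{defn: lower-triangular-coupling} is exactly what makes this ordering well-founded, so I would lean on it at the decisive step. The base case $k=1$ is immediate: Theorem~\ref{thm: flatness-of-canonical-coupling} identifies the joint flat output as $\vcy=[\vcx^1_1\;\cdots\;\vcx^N_1]$, so $\vcx^i_1=\vcy^i=\Phi^i_1(\vcy)$ for every $i$.

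For the inductive step, fix $k\ge 2$ and an index $i$, and assume the outer hypothesis that $\vcx^j_m=\Phi^j_m$ for all $m\le k-1$ and all $j$, together with the inner hypothesis that $\vcx^j_k=\Phi^j_k$ for all $j<i$. The engine of the argument is the $(k-1)$-th substate dynamics of subsystem $i$, $\dot\vcx^i_{k-1}=\bar f^i_{k-1}(\vcx^i_1,\ldots,\vcx^i_k)+\Delta^i_{k-1}(\vcx)$. Solving for $\bar f^i_{k-1}$ and applying the inverse map $h^i_{k-1}$ guaranteed by Assumption~\ref{assm: nominal-flat-map} and characterized in \eqref{eq: defn-of-g} yields $\vcx^i_k=h^i_{k-1}\bigl(\vcx^i_1,\ldots,\vcx^i_{k-1},\,\dot\vcx^i_{k-1}-\Delta^i_{k-1}(\vcx)\bigr)$. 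It then suffices to rewrite each argument in terms of the $\Phi$'s so as to match the recursion \eqref{eq: pert-flat-map-construction}.

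This rewriting proceeds in three moves. The states $\vcx^i_1,\ldots,\vcx^i_{k-1}$ are replaced by $\Phi^i_1,\ldots,\Phi^i_{k-1}$ via the outer hypothesis. The time derivative is handled by the chain rule: since $\Phi^i_{k-1}$ is a function of $(\vcy,\ldots,\vcy^{(k-2)})$, its total derivative along any trajectory equals $D\Phi^i_{k-1}\,[\dot\vcy;\cdots;\vcy^{(k-1)}]$, precisely the derivative term in \eqref{eq: pert-flat-map-construction}, and this is what makes $\Phi^i_k$ a function of $(\vcy,\ldots,\vcy^{(k-1)})$. Finally, and most critically, the lower-triangularity of $\Delta$ forces $\Delta^i_{k-1}$ to depend only on $\vcx_{\le k-1}$ and on the order-$k$ states $\vcx^{<i}_k$ of strictly lower-indexed subsystems; by the outer and inner hypotheses these are exactly $\Phi_1,\ldots,\Phi_{k-1}$ and $\Phi^1_k,\ldots,\Phi^{i-1}_k$, reproducing the coupling argument in the recursion. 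Substituting all three gives $\vcx^i_k=\Phi^i_k(\vcy,\ldots,\vcy^{(k-1)})$, closing the induction; the input identity $\vcu^i=\Phi^i_{r+1}$ follows verbatim from the $r$-th substate dynamics and the inverse map $h^i_r$.

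The main obstacle I would guard against is showing that the recursion genuinely \emph{closes}, i.e., that every argument of $\Phi^i_k$ is already determined at the moment it is evaluated. This is not automatic: it would fail if the coupling reached ``too high,'' depending on $\vcx^j_k$ for some $j\ge i$ or on states of order $k+1$ or beyond. Definition~\ref{defn: lower-triangular-coupling}, with its acyclic dependence on $\vcx^{<i}_{k}$ only, is engineered so that $\Phi^i_k$ calls only lower-feedback-order maps (outer hypothesis) and same-order maps of lower-indexed subsystems (inner hypothesis)---exactly the forward-substitution pattern the theorem exploits. I would also remark that all evaluations stay in the neighborhood $\mathcal{X}$ on which the $h^i_k$ are defined, so the identities \eqref{eq: pert-map-result} hold locally for $(\vcx,\vcu)\in\mathcal{X}$ as claimed.
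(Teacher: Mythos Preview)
Your proposal is correct and follows essentially the same approach as the paper's proof. The paper packages your nested induction (outer on $k$, inner on $i$) as a single induction on the lexicographic index $\iota(i,k)=(k-1)N+i$, but this is purely a presentational choice; the ordering and the key step---writing $\dot\vcx^i_{k-1}-\Delta^i_{k-1}=\bar f^i_{k-1}$, applying $h^i_{k-1}$, and then invoking the induction hypothesis on all arguments---are identical. If anything, your explicit use of the chain rule to identify $\dot\vcx^i_{k-1}$ with $D\Phi^i_{k-1}[\dot\vcy;\cdots;\vcy^{(k-1)}]$ and your discussion of why the recursion closes are more detailed than the paper's treatment.
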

\begin{proof}
    We proceed by induction on the lexicographic index $\iota(i,k) := (k-1) \cdot N + i$, over the domain $\{(i,k) \mid i = 1,\dots,N;\ k = 1,\dots,r+1\}$. For the base case $k=1$, we have from Theorem~\ref{thm: flatness-of-canonical-coupling} that, for each subsystem $i$, the flat output is $\vcy^i = \vcx_1^i$. Thus, $\hat{\Phi}_1^i(\vcy) = \vcy^i = \vcx_1^i$.
    For the sake of induction, assume that for all $(j,s)$ such that $\iota(j,s) < \iota(i,k)$, we have
    \begin{equation*}
        \vcx_s^j = \Phi_s^j(\vcy, \dot{\vcy}, \dots, \vcy^{(s-1)}).
    \end{equation*} 
    It suffices to show that the equality also holds for ${\Phi^i_k}$. Note that \textit{under the joint dynamics}, we have
    \begin{equation*}
        \bar{f}^i_{k-1}(\vcx_1, \ldots, \vcx_{k}) = \dot\vcx^i_{k-1} - \Delta^i_{k-1}(\vcx_{<k}, \vcx^{<i}_k).
    \end{equation*}
    Thus, by the definition of $h$ in \eqref{eq: defn-of-g},
    \begin{equation*}
        h^i_{k-1}\left(\vcx_1, \ldots, \vcx_{k-1}, \dot{\vcx}^i_{k-1} - \Delta^i_{k-1}(\vcx_{<k}, \vcx^{<i}_k)\right) = \vcx^i_k.
    \end{equation*}
    By the induction hypothesis, all arguments on the right-hand side are given by $\Phi_s^j(y, \dots)$ for $\iota(j,s) < \iota(i,k)$. Hence,
    \begin{equation*}
        \vcx^i_{k} = \Phi^i_{k}(\vcy, \ldots, \vcy^{(k-1)}, \vcy^k).
    \end{equation*}
    Since $\Phi_{k'}, \Delta_{k-1}$, and $h_{k-1}$ are smooth, ${\hat{\Phi}_k}$ is also smooth. 
    Adopting the mild abuse of notation $\vcx_{r+1}:= \vcu$, the desired result for $\Psi$ follows from the same argument. 
\end{proof}
\rewrite{}{Under Assumption~\ref{assm: nominal-flat-map}, the inverse maps $h_k^i$ are known in closed form. Consequently, for systems with moderate complexity, the maps $\Phi_k^i$ can often be explicitly derived \textit{offline}, either by hand or via symbolic differentiation, thus incurring minimal computational overhead \textit{online}. For systems with a high relative degree $r$ or highly complex coupling terms, the required time derivatives of $\Delta_{k-1}^i$ can be evaluated online by employing Taylor-mode automatic differentiation (AD), which allows the computational complexity to scale polynomially in $r$ while maintaining numerical stability.}

\section{Structural Properties of the Flatness Diffeomorphism}
\label{sec: locality-of-diffeo}
The existence of a flatness diffeomorphism for the joint system simplifies control design. However, in general, evaluating this joint diffeomorphism may require subsystems to access the flat outputs (and corresponding derivatives) of all other subsystems, which may not be feasible in practice. Fortunately, for lower-triangular couplings, the maps $(\Phi^i, \Psi^i)$ will inherit a sparsity pattern from the coupling $\Delta$, allowing them to be computed with information from only a subset of subsystems. In what follows, we study this sparsity structure and its dependency on the coupling $\Delta$.

We characterize the sparsity of the joint flatness diffeomorphism at each state ${\vcx}$ in terms of the dependencies needed by each subsystem to compute its component of the dynamic coupling $\Delta$ and maps $(\Phi, \Psi)$. Fewer dependencies lead to less information sharing between subsystems, resulting in a more scalable distributed controller. To formalize the sparsity structure of $\Delta$ at joint state $\vcx$, we define a directed coupling graph $\stG(\vcx) = (\stV, \stE(\vcx))$, where $\stV = \{1,\ldots,N\}$ indexes the subsystems. An edge $(j, i)$ is included in the state-dependent edge set $\stE(\vcx)$ if the $i^\textrm{th}$ subsystem's dynamics depend on the $j^\textrm{th}$ subsystem's state at $\vcx$, i.e., that
\[
(j, i) \in \stE(\vcx) \iff \exists\, k\in[r], \quad D_{\vcx^j}^k \Delta^i(\vcx) \ne 0 .
\]
Importantly, the sparsity of the dependency graph is not fixed but varies with the state. Unlike a static dependency graph that captures all possible couplings on $\stX$, $\stG(\vcx)$ reflects only the active dependencies at $\vcx$, 
potentially leading to less information exchange needed between subsystems. Importantly, we note that while we formulated the edge set as state-varying to best capture its sparsity, we still assume that the acyclic state-ordering is fixed and coincides with the system indices. This might arise, for example, in the case of quadrotors coupled under downwash, where the quadrotors have a fixed ordering of their altitude, but can move in and out of the effective region of each other's downwash.

In what follows, denote the set of $k$-hop in-neighbors of node $v$ by $\stN_{\stG(\vcx)}^k(v)$, such that $\stN^0_{\stG(\vcx)}(v) := \{v\}$ and
\[
\stN^k_{\stG(\vcx)}(v)=\{j \in \stV \mid \exists\ i \in \stN^{k-1}_{\stG(\vcx)}(v)\ s.t.\ (j, i) \in \stE\}
\]
for $k\geq 1$. The ancestor set of $v$, denoted $\mathbf{Anc}_{\stG(\vcx)}(v)$, is the set of all nodes from which there exists a directed path to $v$, i.e., 
\begin{equation}
    \mathbf{Anc}_{\stG(\vcx)}(v) = \bigcup_{k \geq 0} \stN_{\stG(\vcx)}^k(v).
\end{equation}
Denote by $\stS^i_k(\vcx)$ the set of indices of subsystems whose flat outputs $\{\vcy^j\}_{j \in \mathcal{S}^i_k(\vcx)}$ (and time derivatives) are required to evaluate $\Phi^i_k$ as defined in \eqref{eq: pert-flat-map-construction}. In what follows, we give bounds on $\stS^i_k(\vcx)$ in terms of the structure of $\stG(\vcx)$.

\begin{corollary}
\label{cor: coupling-locality}
Under the assumptions of Theorem~\ref{thm: form-of-pert-flat-map},  
${\mathcal{S}^i_k(\vcx) \subseteq \mathbf{Anc}_{\stG(\vcx)}(i)}$ for all $i\in [N]$ and $k \in [r+1]$.
\end{corollary}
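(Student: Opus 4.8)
The plan is to prove the inclusion by induction on the lexicographic index $\iota(i,k) := (k-1)\cdot N + i$ introduced in the proof of Theorem~\ref{thm: form-of-pert-flat-map}, so that the recursion \eqref{eq: pert-flat-map-construction} is traversed in exactly the order in which the maps $\Phi^i_k$ become available. For the base case $k=1$, the construction gives $\Phi^i_1(\vcy)=\vcy^i$, so $\mathcal{S}^i_1(\vcx)=\{i\}$; since $i\in\stN^0_{\stG(\vcx)}(i)\subseteq\mathbf{Anc}_{\stG(\vcx)}(i)$, the claim holds. For the inductive step I would assume $\mathcal{S}^j_s(\vcx)\subseteq\mathbf{Anc}_{\stG(\vcx)}(j)$ for every $(j,s)$ with $\iota(j,s)<\iota(i,k)$, and then track exactly which subsystems' flat outputs are touched when evaluating $\Phi^i_k$ via \eqref{eq: pert-flat-map-construction}.

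Evaluating $\Phi^i_k$ reads off three groups of arguments: the same-subsystem maps $\Phi^i_1,\ldots,\Phi^i_{k-1}$; the Jacobian term $D\Phi^i_{k-1}$ contracted with the stacked derivatives $\dot\vcy,\ldots,\vcy^{(k-1)}$, which contributes no new subsystem indices beyond those of $\Phi^i_{k-1}$ (since the columns of $D\Phi^i_{k-1}$ indexed by a subsystem $j\notin\mathcal{S}^i_{k-1}$ vanish, differentiation only raises the order of the flat-output derivatives, not the set of contributing subsystems); and the coupling term $\Delta^i_{k-1}$ evaluated at the lower-index maps. The first two groups carry indices $\iota(i,s)<\iota(i,k)$ for $s\le k-1$, so by the induction hypothesis the flat outputs they require lie in $\mathbf{Anc}_{\stG(\vcx)}(i)$. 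It therefore remains to bound the contribution of $\Delta^i_{k-1}$.

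For the coupling term I would argue as follows. By Definition~\ref{defn: lower-triangular-coupling}, $\Delta^i_{k-1}$ is a function only of $\vcx_{\le k-1}$ and $\vcx^{<i}_{k}$, so every argument it genuinely uses is some $\Phi^j_s$ with either $s\le k-1$ and $j\in[N]$, or $s=k$ and $j<i$; in both cases $\iota(j,s)<\iota(i,k)$, so the induction hypothesis applies and the flat outputs needed for that argument lie in $\mathbf{Anc}_{\stG(\vcx)}(j)$. The crucial observation is that if $\Delta^i_{k-1}$ genuinely depends on $\vcx^j$, then so does the full coupling vector $\Delta^i$, whence $D^k_{\vcx^j}\Delta^i(\vcx)\neq0$ for some order and thus $(j,i)\in\stE(\vcx)$, i.e.\ $j\in\stN^1_{\stG(\vcx)}(i)$ (the self-dependence $j=i$ being already absorbed by $\stN^0$). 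Transitivity of reachability then gives $\mathbf{Anc}_{\stG(\vcx)}(j)\subseteq\mathbf{Anc}_{\stG(\vcx)}(i)$ for each such in-neighbor $j$, so the flat outputs demanded by $\Delta^i_{k-1}$ all lie in $\mathbf{Anc}_{\stG(\vcx)}(i)$. Taking the union over the three groups closes the induction.

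The step I expect to be the main obstacle is the bookkeeping that simultaneously certifies two properties of every argument of $\Delta^i_{k-1}$: that its lexicographic index is strictly below $\iota(i,k)$, so the induction hypothesis is available, and that its subsystem index is an in-neighbor of $i$ in $\stG(\vcx)$, so that ancestor-transitivity folds its ancestor set into $\mathbf{Anc}_{\stG(\vcx)}(i)$. The lower-triangular structure of Definition~\ref{defn: lower-triangular-coupling} is precisely what makes both hold, since it confines the order-$k$ dependencies to lower-indexed subsystems $j<i$; the one point requiring care is the passage from componentwise dependence of $\Delta^i_{k-1}$ to the graph-edge condition on the full vector $\Delta^i$, which must correctly account for both the self-dependence case $j=i$ and the genuine coupling case $j\neq i$.
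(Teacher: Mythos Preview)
Your proposal is correct and follows essentially the same approach as the paper: both arguments track the recursive construction \eqref{eq: pert-flat-map-construction}, use the edge condition to identify the subsystems $j$ on which $\Delta^i_{k-1}$ actually depends, and then invoke transitivity of ancestry to fold $\mathbf{Anc}_{\stG(\vcx)}(j)$ into $\mathbf{Anc}_{\stG(\vcx)}(i)$. The only cosmetic difference is that you run a single induction on the lexicographic index $\iota(i,k)$ (as in the proof of Theorem~\ref{thm: form-of-pert-flat-map}), whereas the paper uses an outer induction on $k$ together with an inner induction on the subsystem index $i$ to handle the same-order terms $\Phi^{<i}_k$; these two orderings traverse the same dependency structure and yield the same proof.
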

\begin{proof}[Proof]
We proceed by induction on $k$. The base case $k = 1$ holds trivially as $\Phi^i_1 = \vcy^i$ depends only on the flat output of subsystem $i$, i.e., that $\stS^i_1 = \{i\} \subseteq \mathbf{Anc}(i)$\footnote{We omit the dependence of $\mathbf{Anc}_{\stG(\vcx)}$ on $\stG(\vcx)$ and $\stS^i_k(\vcx)$ on $\vcx$ to reduce notational clutter in this proof and the proof for Corollary~\ref{cor: block-coupling-locality}.}. For the sake of induction, assume that $\stS^i_{k'} \subseteq \mathbf{Anc}(i)$ for all $k' < k$. We will show that the condition holds for $\stS^i_{k}$.

Recall the definition~\eqref{eq: pert-flat-map-construction} of the map $\Phi^i_k$. By the inductive hypothesis, the terms $\Phi^i_{1}, \dots, \Phi^i_{k-1}$ and $D\Phi^i_{k-1}[y, \dots, y^{(k-1)}] = \frac{d}{dt}\Phi^i_{k-1}$ depend only on flat outputs of subsystems in $\stS^i_{k-1} \subseteq \mathbf{Anc}(i)$. From the definition of $\stG$, the term $\Delta^i_{k-1}$ only depend on subsystems $j$ where $(j, i) \in \stE$. We now analyze the dependencies of its arguments: lower order maps\footnote{We use the notation $\Phi^j_{<k}$ and $\Phi^{<i}_k$ as a parallel of the notation in \eqref{defn: lower-triangular-coupling}.} $\Phi^j_{<k}$ and the $k$-th order lower-indexed maps $\Phi^{<i}_k$. First, from the induction hypothesis, for any $k'<k$, $\Phi^j_{k'}$ depends on subsystems $\stS^j_{k'} \subseteq \mathbf{Anc}(j)\subseteq \mathbf{Anc}(i)$, since $(j, i) \in \stE$. To handle the same-order terms $\Phi^j_k$ with $j < i$, we apply a secondary induction on subsystem index $j$. For $j = 1$, there are no earlier subsystems, so $\stS^1_k \subseteq \mathbf{Anc}(1)$. Assume for induction that $\stS^j_k \subseteq \mathbf{Anc}(j)$ holds for all $j < i$. $\Delta^i_{k-1}$ only depends on $\Phi^j_k$ for $j < i$ and $(j, i) \in \stE$. Thus, by the induction hypothesis, we have $\stS^i_k \subseteq \bigcup_{j<i\wedge (j,i) \in \stE}\mathbf{Anc}(j) \subseteq \mathbf{Anc}(i)$, concluding inner inductions. Therefore, $\Phi^i_k$ depends only on the stated set $\mathcal{S}^i_k$, completing the outer induction.
\end{proof}

Corollary~\ref{cor: coupling-locality} shows that for subsystem $i$, evaluating the map $\Phi^i_k$ at $\vcx$ may require flat output derivatives for all ancestor subsystems in the dependency graph, $\mathbf{Anc}_{\stG(\vcx)}(i)$. This is similar to the structure observed in \citet{lamperski2015optimal} for linear distributed optimal control with directed acyclic graph structure. While this significantly reduces the information requirement for systems of lower indices, the ancestor set of the final subsystem $N$ could encompass the entire network, thereby necessitating global information. However, the result can be strengthened when the coupling \eqref{eq: defn-canonical-lt-coupling} does not depend on  $\vcx^{<i}_{j+1}$, in which case any subsystem $i$ requires only information from its $(k-1)$-hop in-neighbors to compute $\Phi^i_k$.

\begin{definition}[Strongly Lower-Triangular Coupling]\label{defn: block-lower-triangular}
    We say that the coupling term $\Delta(\vcx)$ is \textit{strongly} lower-triangular on $\stX$ if it is lower-triangular, and its decomposition $\Delta^i_j$ can be expressed as
    \begin{equation}\label{eq: defn-block-lt-coupling}
        \Delta^i_j(\vcx) = \bar{\Delta}^i_j(\vcx_1, \ldots, \vcx_{j})
    \end{equation}
    for all $(\vcx, \vcu) \in \stX$.
\end{definition}

\begin{corollary}
\label{cor: block-coupling-locality}
Let the assumptions of Theorem~\ref{thm: form-of-pert-flat-map} hold and suppose that the coupling $\Delta(\vcx)$ is \textit{strongly} lower-triangular. Then,
$\stS^i_k(\vcx) \subseteq \stN_{\stG(\vcx)}^{k-1}(i)$.
\end{corollary}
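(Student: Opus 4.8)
The plan is to mirror the single induction on $k$ used in the proof of Corollary~\ref{cor: coupling-locality}, while exploiting the fact that strong lower-triangularity removes the same-order dependency $\Phi^{<i}_k$ from the recursion \eqref{eq: pert-flat-map-construction}. That dependency is exactly what forced the nested induction on the subsystem index $j$ in the general case and what blew the bound up to the full ancestor set. Because \eqref{eq: defn-block-lt-coupling} makes $\Delta^i_{k-1}$ a function of $\vcx_1,\ldots,\vcx_{k-1}$ only, the coupling term $\Delta^i_{k-1}(\Phi_1,\ldots,\Phi_{k-1})$ appearing in $\Phi^i_k$ no longer references any order-$k$ map. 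Hence a clean induction on $k$ alone should suffice, and it should tighten the locality from $\mathbf{Anc}_{\stG(\vcx)}(i)$ down to a bounded-hop in-neighborhood of $i$.

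First I would dispatch the base case $k=1$ exactly as before: $\Phi^i_1=\vcy^i$ gives $\stS^i_1=\{i\}=\stN^0_{\stG(\vcx)}(i)$. For the inductive step, I would assume $\stS^j_{k'}\subseteq\stN^{k'-1}_{\stG(\vcx)}(j)$ for every subsystem $j$ and every $k'<k$, and then read off the dependencies of $\Phi^i_k$ from \eqref{eq: pert-flat-map-construction}. These split into two groups: (i) the same-subsystem lower-order maps $\Phi^i_1,\ldots,\Phi^i_{k-1}$ together with the time derivative $\tfrac{d}{dt}\Phi^i_{k-1}$, which by the inductive hypothesis require only flat outputs of subsystems within $k-2$ hops of $i$; and (ii) the coupling term $\Delta^i_{k-1}(\Phi_1,\ldots,\Phi_{k-1})$, which by the definition of $\stG(\vcx)$ references only states of subsystems $j$ with $(j,i)\in\stE$, and by strong lower-triangularity only their order-$k'$ components with $k'\le k-1$.

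The key step is a graph-theoretic containment: whenever $(j,i)\in\stE$, any node reachable to $j$ in $\ell$ hops is reachable to $i$ in $\ell+1$ hops, i.e.\ $\stN^{\ell}_{\stG(\vcx)}(j)\subseteq\stN^{\ell+1}_{\stG(\vcx)}(i)$, since prepending the edge $j\to i$ to a length-$\ell$ walk ending at $j$ yields a length-$(\ell+1)$ walk ending at $i$. Applying this together with the inductive hypothesis $\stS^j_{k'}\subseteq\stN^{k'-1}_{\stG(\vcx)}(j)$ bounds the contribution of group (ii) by $\bigcup_{k'\le k-1}\stN^{k'}_{\stG(\vcx)}(i)$; that is, each coupling edge adds exactly one hop relative to what subsystem $j$ itself needed. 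Combining with group (i), whose indices lie within $k-2$ hops of $i$, then yields $\stS^i_k\subseteq\stN^{k-1}_{\stG(\vcx)}(i)$ and closes the induction. Note that, in contrast to the general case, no acyclicity is invoked here: it is precisely the absence of the $\Phi^{<i}_k$ arguments that prevents the recursion from ever closing a same-order loop, so the ordering of subsystem indices plays no role.

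The main obstacle I anticipate is careful bookkeeping of the hop counts, and in particular pinning down the convention for $\stN^{k-1}_{\stG(\vcx)}(i)$: the natural argument above delivers the \emph{cumulative} neighborhood $\bigcup_{\ell=0}^{k-1}\stN^{\ell}_{\stG(\vcx)}(i)$, since the same-subsystem terms of group (i) sit at strictly fewer hops than the freshly incremented coupling terms of group (ii). I would therefore need to confirm that shorter-hop in-neighborhoods are subsumed in the stated $\stN^{k-1}_{\stG(\vcx)}(i)$ (or restate the recursion for $\stN^{k}$ accordingly), and then verify that the single extra hop introduced at each recursion level accumulates exactly so that order-$k$ reconstruction lands at $(k-1)$-hop locality, rather than overshooting.
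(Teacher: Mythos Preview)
Your proposal is correct and mirrors the paper's proof almost exactly: a single induction on $k$ (the nested induction on the subsystem index is dropped precisely because strong lower-triangularity eliminates the $\Phi^{<i}_k$ arguments), the same split of \eqref{eq: pert-flat-map-construction} into same-subsystem lower-order terms and coupling terms, and the same edge-prepending containment $\stN^{\ell}_{\stG(\vcx)}(j)\subseteq\stN^{\ell+1}_{\stG(\vcx)}(i)$ whenever $(j,i)\in\stE$. The cumulative-versus-exact hop concern you flag is real, but the paper's own proof uses the nesting $\stN^{k'}(i)\subseteq\stN^{k-1}(i)$ for $k'\le k-1$ without further comment, so you are not missing any ingredient the paper supplies.
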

\begin{proof}
Similar to Corollary~\ref{cor: coupling-locality}, we proceed by induction on $k$, with the base case $\stS^i_1 = \{i\} \subseteq \stN^{0}(i)$. For the sake of induction, assume that $\stS^i_{k'} \subseteq \stN^{k'-1}(i)$ for all $k' < k$. We will show that the condition holds for $\stS^i_{k}$.

By a similar argument, we have that the terms $\Phi^i_{1}, \dots, \Phi^i_{k-1}$ and $D\Phi^i_{k-1}[y, \dots, y^{(k-1)}] = \frac{d}{dt}\Phi^i_{k-1}$ in \eqref{eq: pert-flat-map-construction} depend only on flat outputs of systems in $\stS^i_{k-1} \subseteq \stN^{k-1}(i) \subseteq \stN^{k}(i)$. By the definition of strongly lower-triangular coupling, $\Delta^i_{k-1}$ only depends on substates $\vcx^j_{<k}$ for $(j, i) \in \stE$. By the inductive hypothesis, each $\vcx^j_{k'}$ with $k' < k$ is reconstructed by $\Phi^j_{k'}$ with dependence on $\stS^j_{k'} \subseteq \stN^j_{k'-1}$. Since $(j, i) \in \stE$, we have $\stN^j_{k'-1} \subseteq \stN^i_{k'} \subseteq \stN^i_{k-1}$. Therefore, the dependency of $\Phi^i_k$ is confined to $\mathcal{N}_{k-1}^i$, thereby concluding the induction.
\end{proof}

The corollary implies that when the coupling is strongly lower-triangular, the flatness diffeomorphism of subsystem $i$ can be computed using at most the flat outputs (and their derivatives) of its $r$-hop neighboring systems. When $\stG(\vcx)$ is sparse, this leads to efficient and scalable distributed implementations as the size of this neighborhood is independent of the total number of subsystems $N$.

\section{Distributed Tracking Controller}\label{sec: tracking-control}
We now design a distributed controller for trajectory tracking. To do so, we emphasize that the dynamics of the flat outputs, $\vcy = \vcx_1$, (i.e., the flat dynamics) decouple across the subsystems. Specifically, the flat dynamics of subsystem $i$ take the form
\begin{equation}\label{eq: unperturbed-flat-dynamics}
    \dot\vcz^i = A^i\vcz^i + B^i\vcv^i, \qquad i = 1, \ldots, N,
\end{equation}
where $\vcz^i := [\vcy^i, \dot\vcy^i, \ldots, (\vcy^i)^{(r)}]^\top$ is the flat state of subsystem $i$, $\vcv^i$ is a virtual input. The matrices $A^i \in \R^{rm_i \times rm_i}$ and $B^i \in \R^{rm_i \times m_i}$ are chosen to match the Brunovsky normal form, i.e., that ${(\vcy^i)}^{(r+1)} = \vcv^i$. Let $(\vcz_{\mathrm{ref}}^i(t), \vcv_{\mathrm{ref}}^i(t))$ denote a feasible flat state and virtual input trajectory for subsystem $i$ that satisfies the flat dynamics \eqref{eq: unperturbed-flat-dynamics}. Assuming that each subsystem $i$ can accurately measure its own flat state $\vcz^i$, our goal is to design a distributed tracking controller that can generate the physical input $\vcu^i(t)$ to stabilize the corresponding physical reference trajectory while minimizing information exchange between the subsystems. To do so, we first design a fully decentralized error feedback controller for the linear flat dynamics of each system. We then use the flatness diffeomorphism to convert the virtual input into nominal input based on flat state measurement.

Defining the tracking error $\tilde{\vcz}^i:= \vcz^i - \vcz^i_{\mathrm{ref}}$, we propose the following error-feedback controller for the flat dynamics:
\begin{equation}\label{eq: feedback-tracking-controller}
\vcv^i(t) = \vcv_{\mathrm{ref}}^i(t) - K^i \tilde{\vcz}^i(t),
\end{equation}
where $K^i \in \R^{m_i \times rm_i}$ is a stabilizing feedback gain, which can be chosen, e.g., via pole placement. We can then compute the control input $\vcu^i$ by evaluating the flatness diffeomorphism $\Psi^i$, which depends on flat state $\vcz^i$, virtual input $\vcv^i$, and $(\vcz^j, \vcv^j)$ of an appropriate subset of the other subsystems. By the results in Section~\ref{sec: locality-of-diffeo}, subsystem $i$ needs to gather $(\vcz^j, \vcv^j)$ from subsystems $j \in \mathbf{Anc}_{\stG(\vcx)}(i)(\vcx)$ to evaluate $\Psi^i$. Further, when the coupling is strongly lower-triangular, this dependency is further localized to its $r$-hop neighbors, $\stN^r_{\stG(\vcx)}(i)$. \rewrite{}{In the prior case, discovering and routing data through a dynamic, global ancestor graph online can be difficult; thus, practical implementation might require the communication graph to be known and established a priori. However, in the latter case, since the coupling structure often reflects spatial proximity, agents can dynamically discover these dependencies online using localized peer-to-peer communication. Further,} this allows for efficient distributed implementation as the information sharing between the systems does not scale with the number of subsystems $N$. The overall pipeline is summarized in Algorithm~\ref{alg: tracking-alg}.
\begin{algorithm}
    \caption{Controller for Subsystem $i$}\label{alg: tracking-alg}
    \KwIn{Reference trajectory $(\vcz^i_{\mathrm{ref}}(t), \vcv^i_{\mathrm{ref}}(t))$}
    \KwOut{Physical control input $\vcu^i(t)$}
    Measure current flat state: $\vcz^i(t)$\;
    Compute error: $\tilde{\vcz}^i(t) \gets \vcz^i(t) - \vcz^i_{\mathrm{ref}}(t)$\;
    Compute virtual input: $\vcv^i(t) \gets \vcv^i_{\mathrm{ref}}(t) - K^i \tilde{\vcz}^i(t)$\;
    Gather $(\vcz^j(t), \vcv^j(t))$ for $j \in \stS^i_{r+1}(\vcx)$\;
    Evaluate $\vcu^i(t) \gets \Psi^i(\vcz, \vcv)$\;
\end{algorithm}
\section{Experiments}\label{sec: experiments}
We demonstrate our algorithm by synthesizing distributed tracking controllers for a swarm of 2D quadrotors subject to downwash interactions. We consider a downwash model that satisfies the lower-triangular structure and show that our proposed framework can synthesize flatness-based distributed controllers with high tracking performance. Then, we show how to approximate the coupling with a strongly lower-triangular surrogate, which leads to controllers that can operate under reduced communication at the cost of a slight increase in tracking error. The code and animations for our experiments can be found at \url{https://github.com/FJYang96/flat-distributed}.

\subsection{System Definition}
We consider each quadrotor $i$ as a subsystem, with states $\vcx^i \in \R^{8}$ and control inputs $\vcu^i \in \R^{2}$, where 
\begin{equation*}
    \vcx_1^i = \vcp^i, \vcx^i_2=\vcv^i, \vcx^i_3 = [T^i, \theta^i], \vcx^i_4 = [\dot T^i, \omega^i], \vcu^i = [\ddot T^i, \tau^i]
\end{equation*}
and the subsystem dynamics are given by
\begin{alignat}{3}
        &\dot \vcx^i_1 = \bar{f}_1^i(\vcx^i_2) = \vcx^i_2, \ \  &&\dot\vcx^i_2 = \bar{f}_2^i(\vcx^i_3) &&= 
        \frac{1}{\mathfrak m}\begin{bmatrix} -T^i\sin\theta^i 
        \\ T^i\cos\theta^i - \mathfrak{m}^ig \end{bmatrix}, \nonumber \\
        &\dot \vcx^i_3 = \bar{f}_3^i(\vcx^i_4) = \vcx^i_4, \ \  &&\dot \vcx^i_4 = \bar{f}_4^i(\vcu^i) &&= \begin{bmatrix} 1 & 0 \\ 0 & I^{-1} \end{bmatrix}\vcu^i.
        \label{eq: 2dquad}
\end{alignat}
Here, $\vcp^i, \vcv^i \in \R^2$ are the position and velocity; $\theta^i \in [-\pi, \pi)$ and $\omega^i \in \R$ are the orientation and angular velocity; and $T^i, \tau^i \in \R$ are the total thrust and torque. The mass $\mathfrak{m}$ and moment of inertia $I$ are positive constants, and $g$ denotes gravitational acceleration. The thrust $T^i$ is dynamically extended twice to cast the subsystem into pure-feedback form, and the flatness diffeomorphism is detailed in \S4 of \citet{yang2025learning}.

The dynamics of individual quadrotors are coupled by downwash-induced drag, where a vehicle flying higher exerts a downward force on those below it. We model the downward drag force experienced by vehicle $i$ as a sum of all pairwise interactions from vehicles above:
\begin{equation}\label{eq: downwash-coupling}
\begin{aligned}
    \dot\vcv^{i}_{2}
    &=\bar{f}^i_{2,2}(\vcx^i_3) +
    \frac{1}{\mathfrak{m}^i}\sum_{(j, i) \in \stE(\vcx)} F_D(\vcp^j - \vcp^i, T^j),
\end{aligned}
\end{equation}
where $\stG(\vcx)=(\stV, \stE(\vcx))$ is a directed acyclic graph, where $(j, i) \in \stE(\vcx)$ if and only if $j$ is above $i$, i.e., that $\vcp^j_2 > \vcp^i_2$. Here, the force on $i$ depends only on its relative position to higher vehicles and their thrust. We note that any smooth function $F_D$ with this dependency yields a lower-triangular coupling (Def.~\ref{defn: lower-triangular-coupling}), ensuring that the coupled system admits the flatness diffeomorphism of Theorem~\ref{thm: form-of-pert-flat-map}.

For the following experiments, we adopt a form of $F_D$ inspired by \citet{jain2019modeling} and simplified for the 2D setup under near-hover flight conditions.
\ifniceformat
In the interest of brevity, we defer the exact derivation, expression, and parameters to Appendix~\ref{sec: downwash-model}.
\else
In the interest of brevity, we defer the exact derivation and expression, along with the specific parameters used for the following experiments, to Appendix~A of the full report \cite{fullversion}.
\fi
Notably, $F_D$ decays with both horizontal and vertical separations and is proportional to the thrust of the upper vehicle. We visualize its decay with respect to vehicle separation in Figure~\ref{fig: downwash-viz}.

\begin{figure}
    \centering
    \ifniceformat
        \includegraphics[width=0.6\linewidth]{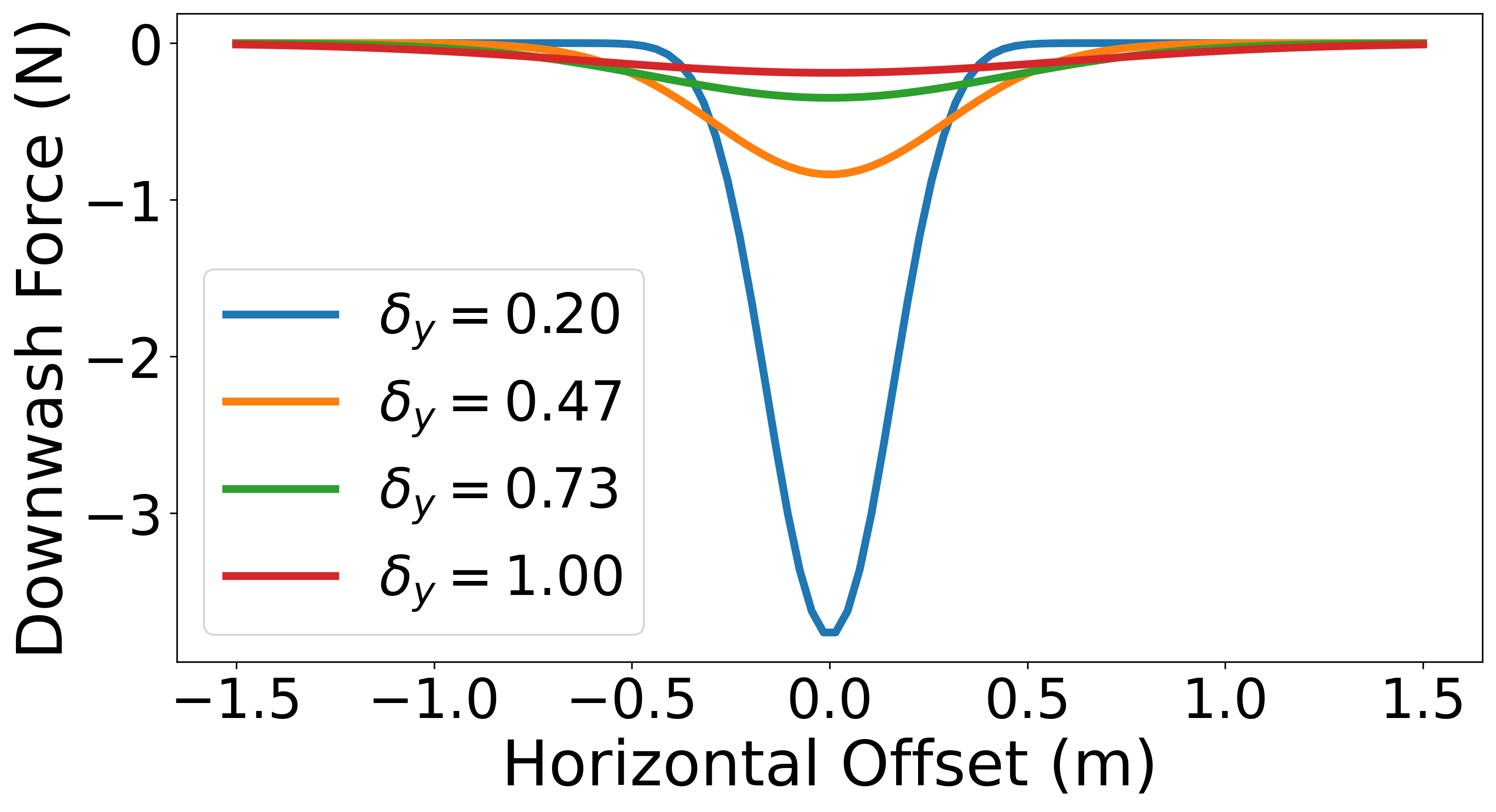}
    \else
        \includegraphics[width=0.9\linewidth]{img/downwash-viz.png}
    \fi
    \caption{Visualization of the $F_D$ for a pair of vehicles as a function of the horizontal separation, plotted at different vertical separations $\delta_y$.}
    \label{fig: downwash-viz}
\end{figure}

\subsection{Experimental Scenario}
We evaluate the tracking controllers from Section~\ref{sec: tracking-control} in a simulation with a group of quadrotors subject to the downwash model. The reference trajectory divides the $N$ quadrotors equally into two stacked formations, where the two groups fly past each other laterally at a constant altitude. Specifically, the reference trajectory (namely, the desired horizontal and vertical position) for quadrotor $i$ is given by
\begin{equation*}
    \vcp^i_{ref,1}(t) = (-1)^{i}\left(\frac{vT}{2} - vt\right),\quad \vcp^i_{ref,2}(t) = N + 1 - i.
\end{equation*}
with a duration $T=5 \, \text{s}$ and speed $v=1 \, \text{m/s}$.

We assume that the controller for each vehicle receives perfect measurements of position, velocity, and acceleration. Higher-order derivatives of the flat output are estimated by a state observer. The dynamics are integrated using a fourth-order Runge-Kutta method with a time step of $\tau=0.01 \, \text{s}$, and the control input is applied via a zero-order hold at $100 \, \text{Hz}$. As the measure of tracking performance, we compute the \textit{mean} position error, $\vce_{\text{pos}}$, averaged over all vehicles and the trajectory duration, given as
\begin{equation*}
    \vce_{pos} = \frac{\tau}{NT}\sum_{i=1}^{N}\sum_{k=1}^{\lfloor T/\tau \rfloor} \norm{\vcx^i_1(k\tau) - \vcp^i_{ref}(k\tau)}.
\end{equation*}

\subsection{Implementations of Distributed Flatness-Based Control}
For a swarm of $N=4$ quadrotors, we implement three different distributed flatness-based controllers based on Algorithm~\ref{alg: tracking-alg}. Across these implementations, we vary the accuracy of the model of dynamic coupling used to construct the flatness diffeomorphism (and hence, $\Psi^i$), despite applying all controllers to the coupled dynamics in \eqref{eq: downwash-coupling}. Control gains $K^i$ for the linear controller in the flat space are chosen to be the optimal LQR gains for costs $\mtQ = 100 \mtI_8$ and $\mtR = \mtI_2$.

\subsubsection{Exact Controller}
We first construct a tracking control algorithm where the map $\Psi^i$ is constructed by following Theorem~\ref{thm: form-of-pert-flat-map} with $\Delta$ corresponding to the \textit{exact} downwash model \eqref{eq: downwash-coupling}, which is lower-triangular. 
Note that by Corollary~\ref{cor: coupling-locality}, evaluating the flat output transformation $\Psi^i$ for subsystem $i$ requires information from its ancestor set $\mathbf{Anc}_\stG(i)$, which corresponds to all vehicles flying at a greater altitude.

\subsubsection{Approximate Controller}
To further reduce communication demands, we develop an approximate downwash model with two simplifications. First, considering the decay of the downwash force with increased separation, we ignore the influence from vehicles sufficiently far away. Second, we use the near-hover assumption to approximate the upper vehicle thrust in $F_D$ as its weight. This leads to the \textit{approximate} downwash model
\begin{equation}\label{eq: approx-downwash-coupling}
    \dot\vcv^{i}_{2} =
    \bar{f}^i_{2,2}(\vcx^i_3) +
    \sum_{(j, i) \in \hat\stE(\vcx)} \hat{F}_D(\vcp^j - \vcp^i),
\end{equation}
where $\hat F_D(\vcdelta) = F_D(\vcdelta, \mathfrak{m}g)$. The approximate dependency graph $\hat\stG$ contains edge $(j, i)$ if $(j, i)\in\stE(\vcx)$ and ${-\bar\vcdelta \prec\vcp^j - \vcp^i \prec \bar{\vcdelta}}$ for a user-picked threshold $\bar\vcdelta\in\R^2_+$. For the following comparison, we set $\bar\vcdelta = [0.5, 2.5]$. Crucially, the approximate downwash model is \textit{strongly} lower-triangular (Def.~\ref{defn: block-lower-triangular}), and moreover, the coupling only affects the vertical acceleration and no states of other orders. Following an analysis in the same fashion as Corollary~\ref{cor: block-coupling-locality}, to evaluate $\Psi^i$, vehicle $i$ only needs information from $\stN^1_{\hat\stG}$, i.e., other vehicles that are within the threshold range.

\subsubsection{Nominal Controller}
As a baseline, we also implement the distributed tracking controller under the assumption of \textit{nominal} conditions (i.e., no coupling), wherein $\Psi^i = \bar\Psi^i$ (from the flatness diffeomorphism of the uncoupled subsystem $i$). This controller does not model the downwash effects at all and relies on each subsystem's controller to reject the downwash effects as an external disturbance. 

\subsection{Experimental Results}

We show the evolution of the position error over time for all controllers in Figure~\ref{fig:tracking-error}. It is clear that downwash has significant effects on the quadrotors' dynamics, as the \textit{Nominal} controller (which ignores all downwash effects) suffers from a high mean position error of 0.0572. In contrast, the \textit{Exact} controller rejects virtually all downwash disturbance, achieving a near-zero mean position error of 0.0001. This corroborates the diffeomorphism construction from Theorem~\ref{thm: form-of-pert-flat-map} and the ability of our framework to leverage this flatness property to synthesize distributed controllers for coupled systems. On the other hand, the \textit{Approximate} controller significantly outperforms the \textit{Nominal} controller, with a mean position error of 0.015. This demonstrates its effectiveness as a trade-off, reducing the communication burden of the \textit{Exact} controller while still providing adequate compensation. An illustration of the trajectories is provided in Figure~\ref{fig:trajectory}.

\begin{figure}
    \centering
    \ifniceformat
        \includegraphics[width=0.6\linewidth]{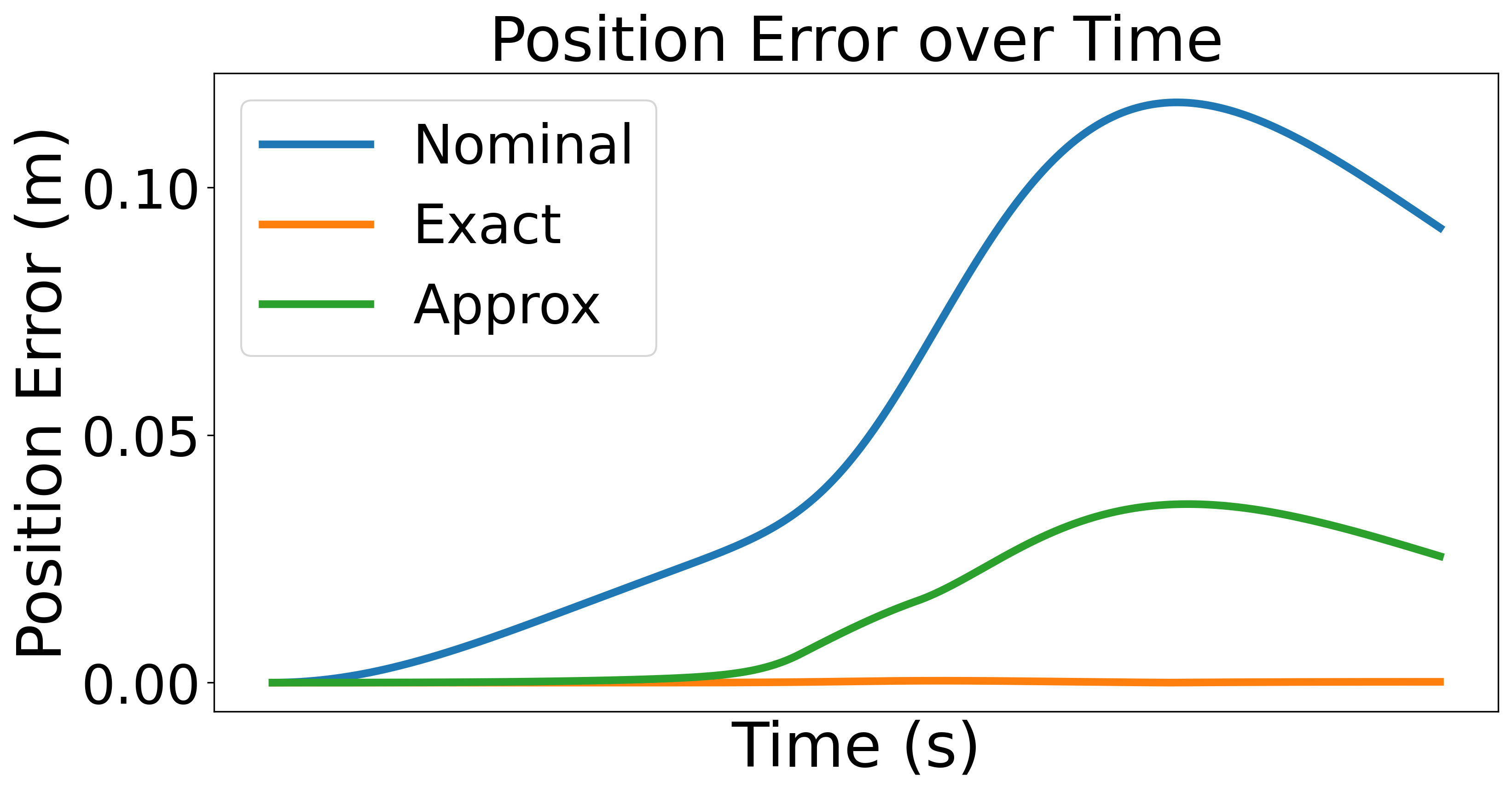}
    \else
        \includegraphics[width=\linewidth]{img/error.png}
    \fi
    \caption{Evolution of position error over time, averaged over the vehicles.}
    \label{fig:tracking-error}
\end{figure}

\begin{figure}
    \centering
    \ifniceformat
        \includegraphics[width=0.6\linewidth]{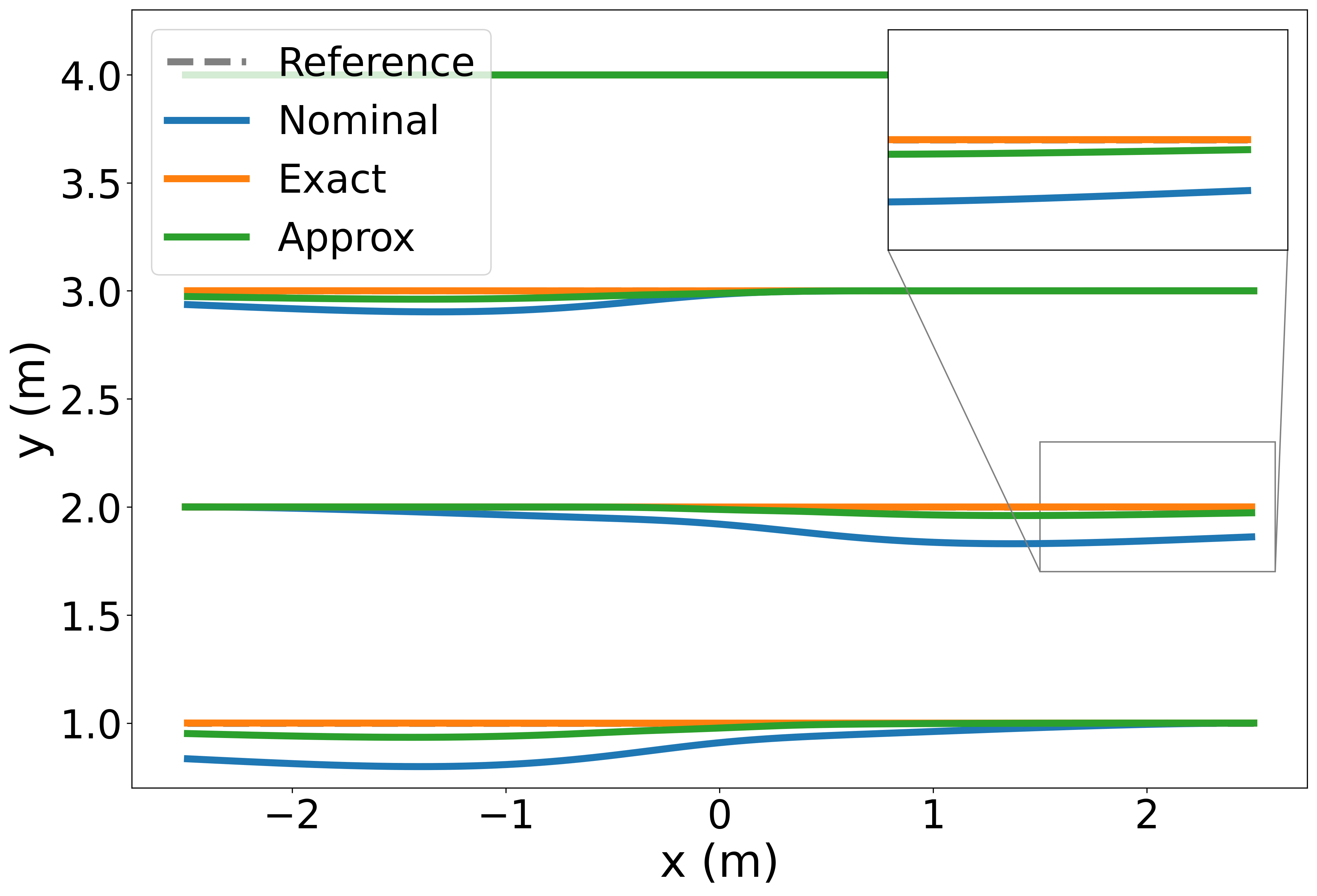}
    \else
        \includegraphics[width=\linewidth]{img/trajectory.png}
    \fi
    \caption{Trajectory of the quadrotors. Dotted grey lines represent the reference trajectories. Flatness-based controllers constructed under both the exact and the approximate downwash coupling achieve good tracking performance.}
    \label{fig:trajectory}
\end{figure}

Finally, we investigate how the approximation quality varies as we vary the approximation range $\bar\vcdelta$. For the following experiments, we set $\bar\vcdelta = [\bar\delta, \bar\delta]$ and vary the constant $\bar\delta$ for $N=10$ vehicles in the same setup as the previous experiment. We show the results in Figure~\ref{fig: approximation-threshold}. As we increase the cutoff threshold, we observe better tracking performance (lower position error) at the cost of increased information exchange between subsystems. This further demonstrates the applicability of the approximation method to achieve a desired trade-off between information sharing and tracking performance.

\begin{figure}
    \centering
    \ifniceformat
        \includegraphics[width=0.6\linewidth]{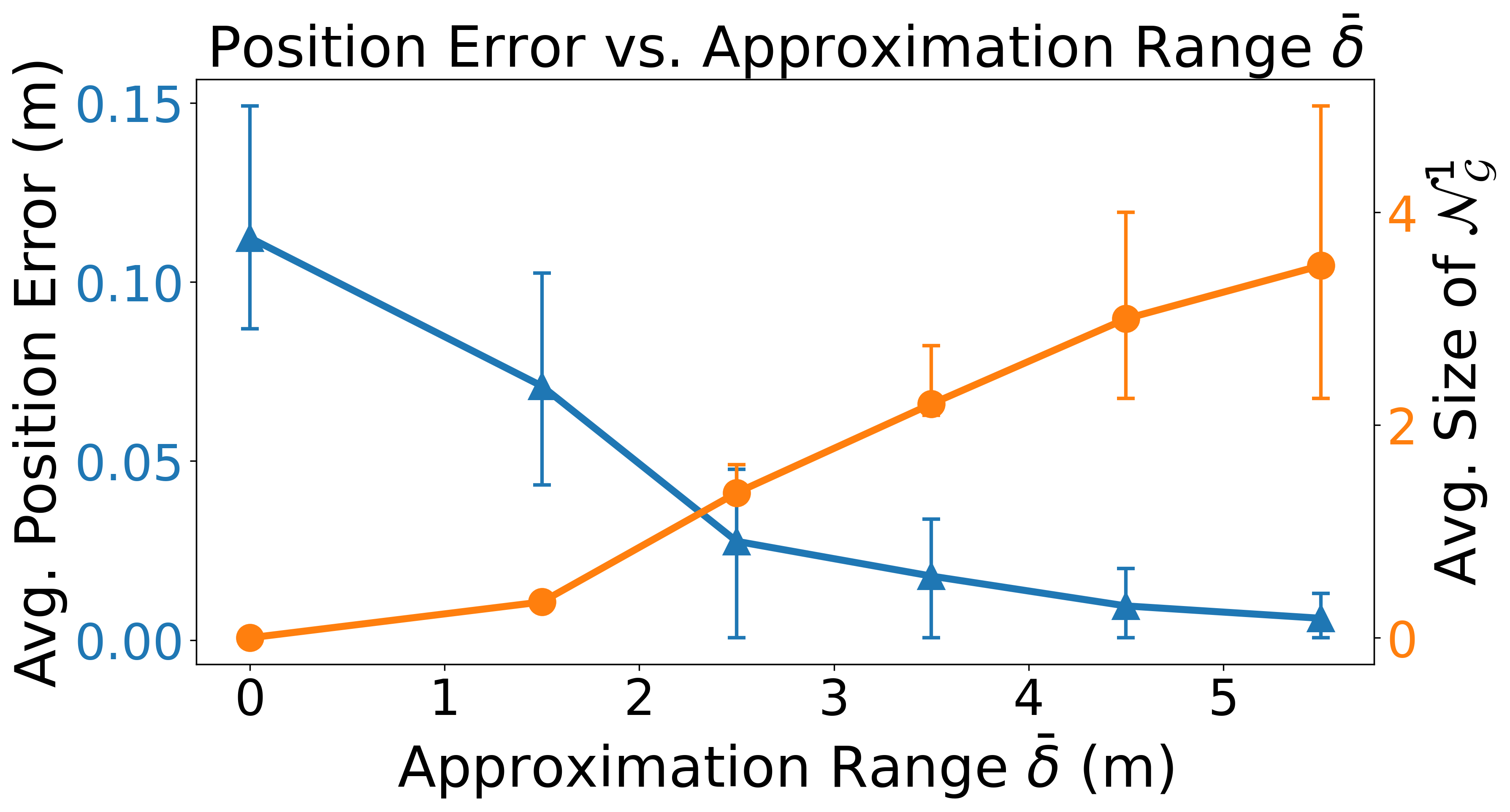}
    \else
        \includegraphics[width=0.98\linewidth]{img/approximation.png}
    \fi
    \caption{Average position errors for $N=10$ vehicles as we vary the approximation cutoff threshold $\bar\vcdelta=[\bar\vcdelta, \bar\vcdelta]$. The position error decreases as we increase the cutoff threshold (communication range), as the average number of vehicles in the communication set increases. The error bars denote the inter-quartile range computed over the vehicles.}
    \label{fig: approximation-threshold}
\end{figure}

\section{Conclusion}
We proposed a distributed controller for networks of flat, pure-feedback systems under dynamic coupling. We identified a class of lower-triangular couplings that preserve the flatness for the joint system and allow the joint flatness diffeomorphism to be explicitly constructed. We analyzed and exploited the sparsity pattern of the flatness diffeomorphism to design distributed controllers under communication constraints. We validated the framework in simulation for planar quadrotors coupled by downwash drag, showing that our flatness-based distributed controller achieves accurate tracking. Future work includes hardware experiments as well, relaxing the lower-triangular constraints on flatness-preserving couplings, \rewrite{}{and incorporating adaptive control techniques to reject disturbances that arise from inaccuracies in approximate coupling models \citep{join2024flatness}}.


\ifniceformat
\bibliographystyle{unsrtnat}
\else
\bibliographystyle{bibFiles/IEEEbib}
\fi
\bibliography{bibFiles/sample}

\newpage
\appendix
\ifniceformat
\newcommand{\apdxsection}[1]{\section{#1}}
\newcommand{\apdxsubsection}[1]{\subsection{#1}}
\else
\newcommand{\apdxsection}[1]{\subsection{#1}}
\newcommand{\apdxsubsection}[1]{\subsubsection{#1}}
\fi




\apdxsection{Downwash Model}\label{sec: downwash-model}
\apdxsubsection{Derivation of the Downwash Model}
We start from the downwash model proposed in \cite{jain2019modeling} and adapt it to our 2D case under the near-hover assumption. From \cite{jain2019modeling}, one can approximate the magnitude of the wind velocity $V$ below a hovering vehicle as
\begin{equation*}
    V(\vcdelta, T) = C_1\frac{\sqrt{T}L}{\delta_y} \exp\left(-C_2\left(\frac{\delta_x}{\delta_y}\right)^2\right),
\end{equation*}
where $\vcdelta = \vcp^\text{top} - \vcp^\text{bottom} =(\delta_x, \delta_y)$ are the radial (horizontal) and axial (vertical) separation between the two vehicles, respectively. We assume that $\delta_y > 0$, i.e., that the bottom vehicle is strictly below the top. $T$ denotes the thrust force of the top vehicle, while $L$ is the size of both vehicles (double the arm length). $C_1$ and $C_2$ are constants that depend on the geometry of both the propellers and the vehicle. We now proceed to find the direct drag force of this velocity field on the airframe of a vehicle below. While the downwash also alters the inflow of the propellers of the vehicle below and creates a loss of thrust, we assume that this effect is dominated by the airframe drag, which is true when the axial separation is small.

The induced velocity field results in both a drag force and a drag torque on the airframe of the bottom vehicle. Assuming that the airframe of the bottom vehicle has a uniform drag coefficient $C_D$ and is near hover, the \textit{drag density} on the airframe is given by
\begin{align*}
  D(\vcdelta, T)
  &= \frac{1}{2}\rho C_D V^2(\vcdelta, T) \\
  &= \underbrace{\frac{1}{2}\rho C_D C_1^2}_{C_3} \frac{T L^2}{\delta_y^2} \exp\left[-2 C_2\left(\frac{\delta_x}{\delta_y}\right)^2\right]
\end{align*}

The drag force $F_D$ can be found by integrating the drag density over the airframe.
Let $l\in[-L/2,L/2]$ denote the span-wise coordinate ($l=0$ at the hub) and define $k:=2C_{2}/\delta_y^{2}$. From the near-hover assumption, the drag force is
\begin{align*}
    F_D
    &= -C_3 \frac{T L^2}{\delta_y^2} \int_{-L/2}^{L/2}\exp(-k(\delta_x+l)^2) \mathrm dl, 
     \\
     &= -C_3 \frac{\sqrt\pi L^2 T}{2\sqrt{2C_2} \delta_y} \left[
           \erf\left(\sqrt{\tfrac{2C_2}{\delta_y^2}}\,(\delta_x+\tfrac{L}{2})\right)-
           \erf\left(\sqrt{\tfrac{2C_2}{\delta_y^2}}\,(\delta_x-\tfrac{L}{2})\right)
         \right]
    \label{eg:FD_final}
\end{align*}
which evaluates to the form we have in the experiment section by grouping constants appropriately. The drag torque can be found similarly as 
\begin{align*}
    \vctau_D 
    &= \int_{L/2}^{L/2} -l D(\vcdelta, T)\mathrm dl \\
    &= - C_3 \frac{TL^2}{\delta_y^2} \int_{-L/2}^{L/2} l\exp\left[-k(\delta_x+l)^2\right]\mathrm dl\\
    &=C_3 \frac{T L^2}{\delta_y^2} \Bigl\{
        \frac{\delta_y^2}{4C_{2}}
        \Bigl[
            e^{-k(\delta_x-L/2)^{2}}-e^{-k(\delta_x+L/2)^{2}}
        \Bigr] \\
    &\qquad
        + \frac{\delta_x\sqrt{\pi}\,\delta_y}{2\sqrt{2C_{2}}}
        \Bigl[
            \erf\bigl(\sqrt{k}\,(\delta_x-L/2)\bigr)-
            \erf\bigl(\sqrt{k}\,(\delta_x+L/2)\bigr)
        \Bigr]
    \Bigr\}.
\end{align*}

We note that the drag torque also satisfies Definition~\ref{defn: lower-triangular-coupling} and thus our approach applies to reject the torque coupling as well. We did not include this in the experiment section for the sake of simplicity of presentation, but we show videos of our approach applied to the case with drag torques on the GitHub page containing our experiment code.

The parameters for the quadrotor and downwash used for the experiments are reported in Table~\ref{tab: quad-params}.
\begin{table}[ht]
    \centering
    \begin{tabular}{|c|c|}
        \hline
        \textbf{Parameter} & \textbf{Value} \\ \hline
        $\mathfrak{m}$ & 1.0 \\ \hline
        $I$ & 0.1\\ \hline
        $g$ & 9.81 \\ \hline
        $C_1$ & 1.0 \\ \hline
        $C_2$ & 0.7 \\ \hline
        $C_D$ & 1.18 \\ \hline
        $L$ & 0.3 \\ \hline
    \end{tabular}
    \caption{Quadrotor and Disturbance Parameters for Numerical Experiments}
    \label{tab: quad-params}
\end{table}

\end{document}